 \newcommand{\bs}{\bigskip}
 \newcommand{\ms}{\medskip}
 \newcommand{\n}{\noindent}
 \newcommand{\s}{\smallskip}
 \newcommand{\hs}[1]{\hspace*{ #1 mm}}
 \newcommand{\vs}[1]{\vspace*{ #1 mm}}
 \newcommand{\setempty}{\varnothing}
 \newcommand{\nat}{\mathbb{N}}
 \newcommand{\integer}{\mathbb{Z}}
 \newcommand{\co}{\mathrm{co}\mbox{-}}
 \newcommand{\AAA}{{\cal A}}
 \newcommand{\CC}{{\cal C}}
 \newcommand{\FF}{{\cal F}}
 \newcommand{\DD}{{\cal D}}
 \newcommand{\KK}{{\cal K}}
 \newcommand{\LL}{{\cal L}}
 \newcommand{\MM}{{\cal M}}
 \newcommand{\PP}{{\cal P}}
 \newcommand{\dl}{\mathrm{L}}
 \newcommand{\nl}{\mathrm{NL}}
 \newcommand{\np}{\mathrm{NP}}
 \newcommand{\poly}{\mathrm{poly}}
 \newcommand{\cfl}{\mathrm{CFL}}
\theoremstyle{plain}
 \newtheorem{theorem}{Theorem}[section]
 \newtheorem{lemma}[theorem]{Lemma}
 \newtheorem{proposition}[theorem]{{\bf Proposition}}
 \newtheorem{corollary}[theorem]{Corollary}
\newtheorem{definition}[theorem]{Definition}}
 \newenvironment{proofof}[1]{\vspace*{5mm} \par \noindent
         {\bf Proof of #1.\hs{2}}}{\hfill$\Box$ \vspace*{3mm}}
 \newtheorem{yclaim}[theorem]{Claim}
 \newenvironment{proof}{\par \noindent
            {\bf Proof. \hs{2}}}{\hfill$\Box$ \vspace*{3mm}}
 \newcommand{\floors}[1]{\lfloor #1 \rfloor}
 \newcommand{\pair}[1]{\langle #1 \rangle}
\newcommand{\ignore}[1]{}
 \newcommand{\unary}{\mathrm{unary}}
 \newcommand{\oned}{1\mathrm{D}}
 \newcommand{\twod}{2\mathrm{D}}
 \newcommand{\twon}{2\mathrm{N}}
 \newcommand{\para}{\mathrm{para}\mbox{-}}
 \newcommand{\phsp}{\mathrm{PHSP}}
 \newcommand{\twou}{2\mathrm{U}}
\newcommand{\logdcfl}{\mathrm{LOGDCFL}}
\newcommand{\logcfl}{\mathrm{LOGCFL}}
\newcommand{\ptime}{\mathrm{ptime}\mbox{-}}
\newcommand{\up}{\mathrm{UP}}
\newcommand{\ul}{\mathrm{UL}}
\newcommand{\twodpd}{2\mathrm{DPD}}
\newcommand{\twonpd}{2\mathrm{NPD}}
\newcommand{\twonct}{2\mathrm{NCT}}
\newcommand{\twoupd}{2\mathrm{UPD}}
\newcommand{\twouct}{2\mathrm{UCT}}
\newcommand{\twodct}{2\mathrm{DCT}}
\newcommand{\twonpdct}{2\mathrm{NPDCT}}
\newcommand{\twoupdct}{2\mathrm{UPDCT}}
\newcommand{\twodpdct}{2\mathrm{DPDCT}}
\newcommand{\PHSP}{\mathrm{PHSP}}
\newcommand{\logucfl}{\mathrm{LOGUCFL}}
\newcommand{\LGOOD}{\mathrm{LGOOD}}
\begin{document}

\pagestyle{plain}
\pagenumbering{arabic}
\setcounter{page}{1}
\setcounter{footnote}{0}

\begin{center}
{\Large Unambiguous and Co-Nondeterministic Computations of Finite Automata and Pushdown Automata Families and the Effects of Multiple Counters}\footnote{A preliminary report \cite{Yam24} appeared in the Proceedings of
the 18th Annual Conference on Theory and Applications of Models of Computation (TAMC 2024), Hong Kong, China, May 13-15, 2024, Lecture Notes in Computer Science, vol. 14637, pp. 14--25, Springer, 2024.}
\bs\ms\\

{\sc Tomoyuki Yamakami}\footnote{Present Affiliation: Faculty of Engineering, University of Fukui, 3-9-1 Bunkyo, Fukui 910-8507, Japan} \bs\\
\end{center}
\ms

\begin{abstract}
Nonuniform families of polynomial-size finite automata and pushdown automata respectively have strong connections to nonuniform-NL and nonuniform-LOGCFL.
We examine the behaviors of unambiguous and co-nondeterministic  computations produced by such families of automata operating multiple counters, where a counter is a stack using
only a single non-bottom symbol.
As immediate consequences, we obtain various collapses of the complexity classes of
families of promise problems solvable by finite and pushdown automata families when all valid instances are limited to either polynomially long strings or unary strings.
A key technical ingredient of our proofs is an inductive counting of reachable vertices of each computation graph of finite and pushdown automata that operate multiple counters simultaneously.
\end{abstract}

\sloppy
\section{Background and Challenging Questions}\label{sec:introduction}

This section will provide background knowledge on the topics of this work, raise important open questions, and overview their partial solutions.

\subsection{Two Important Open Questions on Nonuniform (Stack-)State Complexity Classes}\label{sec:state-complexity}

\emph{Nondeterministic computation} has played an important role in computational complexity theory as well as automata theory.
Associated with such computation, there are two central and crucial questions to resolve. (i) Can any co-nondeterministic computation be simulated on an  appropriate nondeterministic machine?
(ii) Can any nondeterministic machine be made unambiguous?
In the polynomial-time setting, these questions correspond to the  famous $\np=?\co\np$ and $\np=?\up$ questions.
In this work, we attempt to resolve these questions in the setting of nonuniform polynomial state complexity classes.

We quickly review the origin and the latest progress of the study of nonuniform state complexity classes.
Apart from a standard uniform model of finite automata, Berman and Lingas \cite{BL77} and Sakoda and Sipser \cite{SS78} considered,  as a ``collective'' model of computations, using nonuniform families of finite automata indexed by natural numbers and, in particular, they studied the computational power of these families of finite automata having \emph{polynomial size} (i.e., having polynomially many inner states).
Unlike Boolean circuit families, these automata are allowed to take ``arbitrarily'' long inputs to read and process.
As underlying models of polynomial-size machines, Sakoda and Sipser focused on \emph{two-way deterministic finite automata} (or 2dfa's, for short) as well as \emph{two-way nondeterministic finite automata} (or 2nfa's).
They introduced the complexity classes, dubbed as $\twod$ and $\twon$, which consist of all families of ``promise'' problems\footnote{A \emph{promise (decision) problem} over alphabet $\Sigma$ is a pair $(A,R)$ satisfying that  $A,R\subseteq \Sigma^*$ and $A\cap R=\setempty$. A \emph{language} $L$ over $\Sigma$ can be identified with a unique promise problem having the form $(L,\Sigma^*-L)$.} solvable respectively by nonuniform families of polynomial-size 2dfa's and 2nfa's running in polynomial time.
As their natural extensions, nonuniform families of  \emph{two-way deterministic pushdown automata} (or 2dpda's) and {two-way nondeterministic pushdown automata} (or 2npda's) whose runtimes are bounded by suitable polynomials have also been studied lately in \cite{Yam21,Yam23b}.
Similarly to $\twod$ and $\twon$, these pushdown automata models induce two corresponding complexity classes of promise problem families, denoted respectively by $\twodpd$ and $\twonpd$ \cite{Yam21,Yam23b}.
Since an introduction of nonuniform polynomial-size finite automata families,
various machine types (such as deterministic, nondeterministic, alternating,  probabilistic, and quantum) have been studied in depth   \cite{Gef12,Kap09,Kap12,Kap14,KP15,Yam19a,Yam21,Yam22a,Yam22b,Yam23b}.
When underlying finite and pushdown automata for $\twon$ and $\twonpd$ are unambiguous (at least on all valid instances), we use the notations $\twou$ \cite{Kap09,Kap12} and $\twoupd$ \cite{Yam22b}, respectively.

For nonuniform families of finite and pushdown automata, nevertheless, the aforementioned two central open questions (i)--(ii) correspond to the $\twon=?\twou$,  $\twon=?\co\twon$, $\twonpd =? \twoupd$, and $\twonpd =? \co\twonpd$ questions.
Unfortunately, these four equalities are not known to hold at this moment.
It is therefore desirable to continue the study on the behaviors of finite and pushdown automata families in order to deepen our understandings of these machine families and to eventually resolve those four central questions.

An importance of polynomial-size finite automata families comes from their close connection to decision problems in the nonuniform variants of the log-space complexity classes $\dl$ and $\nl$, when all promised (or valid) instances given to underlying finite automata are limited to polynomially long strings (where this condition is referred to as  a  \emph{polynomial ceiling}) \cite{Kap14}, or when all valid instances are limited to unary strings  \cite{KP15}.
In a similar fashion, when instances are restricted to polynomial ceilings, $\twodpd$ and $\twonpd$ are closely related to
the nonuniform versions of $\logdcfl$ and $\logcfl$ \cite{Yam21},
where $\logcfl$ (resp., $\logdcfl$) is the collection of all languages log-space many-one reducible to  context-free (resp., deterministic context-free) languages \cite{Sud78}.
These strong correspondences to standard complexity classes provide one of the good reasons to investigate the fundamental properties of various types of nonuniform finite and pushdown automata families in hopes of achieving a better understanding of parallel complexity classes,
such as $\dl$, $\nl$, $\logdcfl$, and $\logcfl$, in the nonuniform setting.

\subsection{New Challenges and Main Contributions}

The primary purpose of this work is to present ``partial'' solutions to the two important questions (i)--(ii) raised in Section \ref{sec:state-complexity} associated particularly with $\twon$ and $\twonpd$ by studying
the computational power of nonuniform families of polynomial-size finite and pushdown automata in depth.

In the course of our study, we further look into the key role of ``counters'', each of
which is essentially a stack manipulating only a single symbol, say, ``$1$'' except for the bottom marker $\bot$. Since the total number of $1$s in a counter can be viewed as a natural number, the counter is able to ``count'', as its name suggests.
We supplement multiple counters to the existing models of nonuniform finite and pushdown automata families. This is the first time that counters are formally treated in the context of nonuniform {(stack-)state} complexity.
We remark that, for short runtime computation, counters are significantly weaker\footnote{With exponential overhead, 2-counter automata can simulate a Turing machine  \cite{Min67}} in functionality than full-scale stacks.
Even though, the proper use of counters can help us not only trace the tape head location but also count the number of steps.

By appending multiple counters to finite and pushdown automata, we obtain the  machine models of \emph{counter automata} and \emph{counter pushdown automata}.
Two additional complexity classes,
$\twonct_k$ and
$\twonpdct_k$ are naturally obtained by taking families of polynomial-size nondeterministic counter automata and counter pushdown automata operating $k$ counters. It is possible to reduce the number of counters ($\twonct_k=\twonct_4$). An appropriate use of a stack further reduces the number of counters to $3$ ($\twonpdct_k=\twonpdct_3$).

The use of multiple counters makes it possible to show in Section \ref{sec:reachability} that $\twonct_4$ and $\co\twonct_4$ coincide. By another direct simulation, we will show that $\co\twonpd$ and $\twonpd$ also coincide. These results further lead to the equivalences between $\co\twon$ and $\twon$ and between $\co\twonpd$ and $\twonpd$ in Section \ref{sec:complementary} when all promise problem families are restricted to having polynomial ceilings.
Our results are briefly summarized in Fig.~\ref{fig:hierarchy}, in which the suffix ``$/\poly$''
refers to the polynomial ceiling restriction
and the suffix ``$/\mathrm{unary}$'' refers to the restriction to unary input strings.
To obtain some of the equalities in the figure, we will exploit a close connection between \emph{parameterized decision problems} and families of promise problems, which was first observed in \cite{Yam22a} and then fully developed in \cite{Yam21,Yam22b,Yam23b}.


\begin{figure}[t]
\centering
\includegraphics*[height=5.0cm]{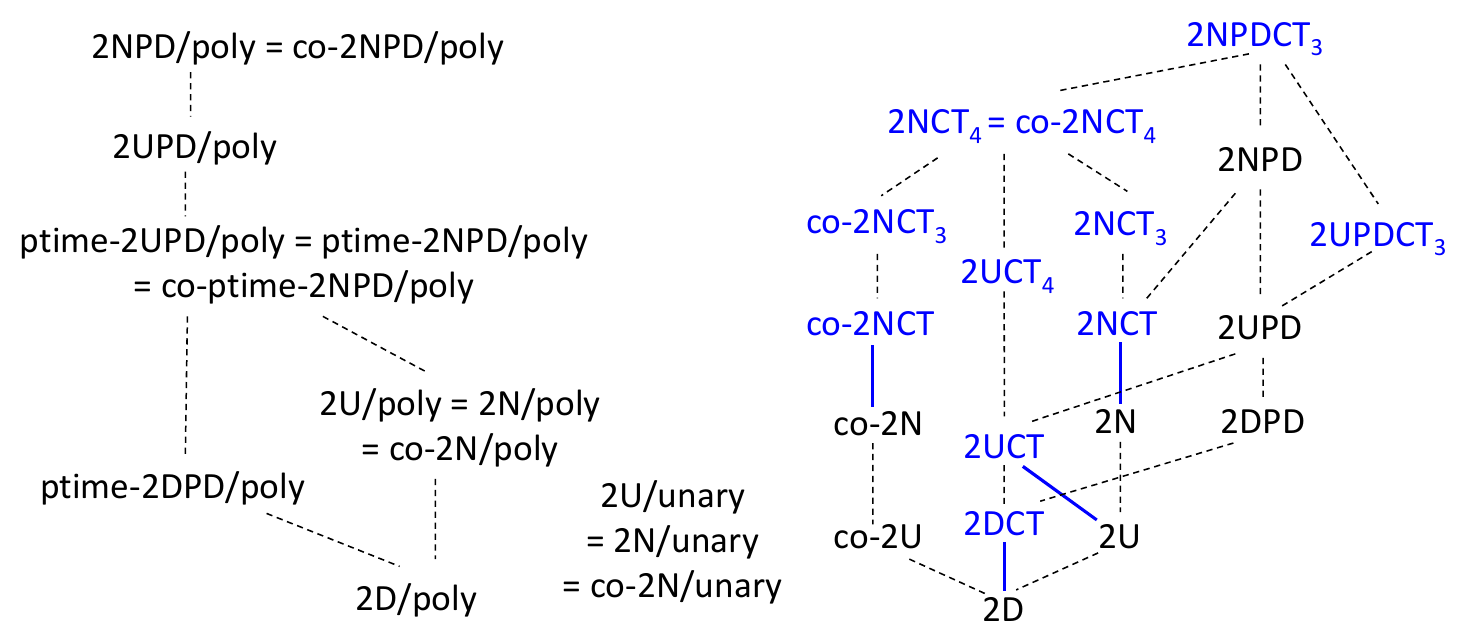}
\caption{Containments among nonuniform polynomial {(stack-)state} complexity classes shown in this work. A blue solid line indicates proper inclusion whereas a black dotted line does simple inclusion (not known to be proper).
Remark that the collapse $\twou/\poly=\twon/\poly$
comes from \cite{Yam22b} and $\twou/\unary = \twon/\unary = \co\twon/\unary$ are drawn from \cite{GMP07,GP11} in Section \ref{sec:unary-inputs}.
}\label{fig:hierarchy}
\end{figure}


\section{Preliminaries: Notions and Notation}\label{sec:notion-notation}

We briefly explain fundamental notions and notation used in the rest of this work.

\subsection{Numbers, Languages, and Pushdown Automata}\label{sec:numbers}

The set of all \emph{natural numbers} (including $0$) is denoted $\nat$ and the positive-integer set $\nat-\{0\}$ is expressed as $\nat^{+}$. Given two integers $m$ and $n$ with $m\leq n$, the notation $[m,n]_{\integer}$ denotes the \emph{integer set} $\{m,m+1,m+2,\ldots,n\}$. As a special case, we write $[n]$ for $[1,n]_{\integer}$ when $n\in\nat^{+}$. In this work, all  \emph{polynomials} must have nonnegative coefficients and all \emph{logarithms} are taken to the base $2$ with the notation $\log{0}$ being treated as $0$.
The \emph{power set} of a set $S$ is denoted $\PP(S)$.

An \emph{alphabet} is a finite nonempty set of ``symbols'' or ``letters''. Given an alphabet $\Sigma$ and a number $n\in\nat$, the notation $\Sigma^n$ (resp., $\Sigma^{\leq n}$) denotes the set of all strings over $\Sigma$ of length exactly $n$ (resp., at most $n$). The \emph{empty string} is always denoted $\varepsilon$.

As a basic machine model, we use \emph{two-way nondeterministic finite automata} (or 2nfa's, for short) that make neither  $\varepsilon$-moves\footnote{An \emph{$\varepsilon$-move} (or an $\varepsilon$-transition) of an automaton refers to an action of the automaton, which makes a transition without reading any input symbol.} nor stationary moves.
This means that input tape heads of 2nfa's always move to adjacent tape cells without stopping at any tape cells.
Formally, a 2nfa $M$ is of the form $(Q,\Sigma,\{\rhd,\lhd\}, \delta,q_0, Q_{acc}, Q_{rej})$ with a finite set $Q$ of inner states, an (input) alphabet $\Sigma$, the initial (inner) state $q_0$, and a set $Q_{acc}$ (resp., $Q_{rej}$) of accepting (resp., rejecting) states satisfying $Q_{acc},Q_{rej}\subseteq Q$ and $Q_{acc}\cap Q_{reje}=\setempty$.
A \emph{halting (inner) state} is either an accepting state or a rejecting state.
We write $Q_{halt}$ for $Q_{acc}\cup Q_{rej}$.
Given a finite automaton $M$, the \emph{state complexity} $sc(M)$ of $M$ is the total number of inner states used for $M$; namely, $sc(M)=|Q|$.
A 2nfa starts with an input $x$, surrounded by two endmarkers $\rhd,\lhd$ written on an input tape, makes a series of transitions by applying $\delta$, and terminates with halting states. Remember that $M$ is allowed to make \emph{$\varepsilon$-moves} (or $\varepsilon$-transitions).
The machine is said to \emph{accept} $x$ if there exists an accepting computation path and \emph{reject} if all computation paths are rejecting.

Another important machine model is \emph{two-way nondeterministic pushdown automata} (or 2npda's) $N$ over alphabet $\Sigma$ with
a stack alphabet $\Gamma$ including the stack's bottom marker $\bot$. Notice that the bottom marker is neither popped nor
pushed into any non-bottom cell.
The \emph{stack-state complexity} $ssc(N)$ of $N$ denotes the product $|Q|\cdot |\Gamma^{\leq e}|$, which turns out to be a useful complexity measure \cite{Yam21,Yam23b}, where $Q$ is a set of inner states and $e$ is the \emph{push size} (i.e., the maximum length of pushed strings into the stack at any single push operation).
We remark that the one-way restriction of 2npda's defines \emph{context-free languages}. The complexity class $\cfl$ is composed of all such context-free languages.

We also use \emph{two-way nondeterministic auxiliary pushdown automata} (or aux-2npda's), each of which is equipped with a read-only input tape, a stack, and a (two-way) auxiliary (work) tape \cite{Coo71}. We introduce the notation $\mathrm{NAuxPDA,\!\!TISP}(t(n),s(n))$ for the collection of all languages recognized by aux-2npda's running in $O(t(n))$ time using $O(s(n))$ work space. Sudborough \cite{Sud78} demonstrated that $\mathrm{NAuxPDA,\!\!TISP}(n^{O(1)},\log{n})$ coincides with $\logcfl$ (i.e., the closure of $\cfl$ under log-space many-one reductions).


\begin{figure}[t]
\centering
\includegraphics*[height=5.0cm]{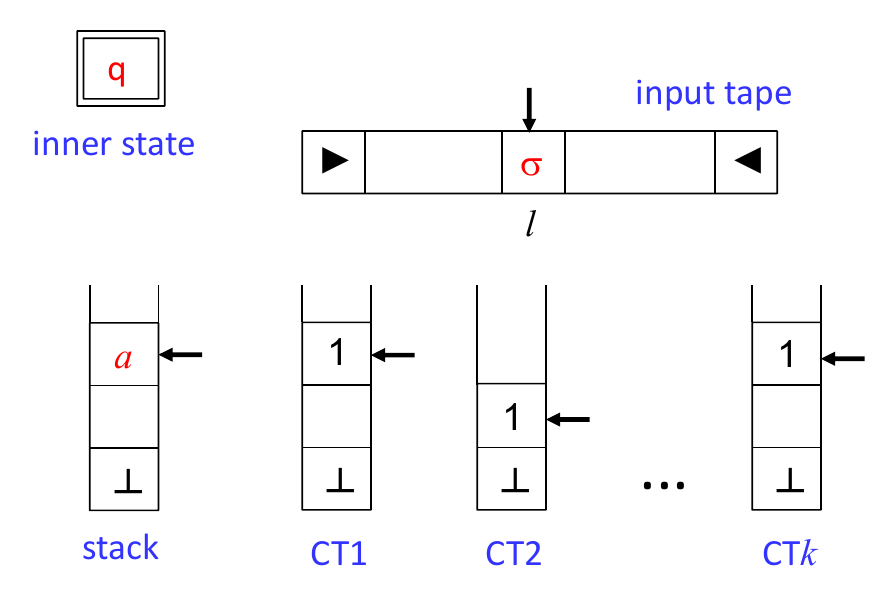}
\caption{A two-way $k$-counter pushdown automaton. Counters are indexed from CT1 to CT$k$. An input-tape head is scanning the $l$th tape cell, which holds an input symbol $\sigma$. The topmost stack cell holds a stack symbol $a$.}\label{fig:2npda}
\end{figure}


\subsection{Advice and Advised Machines}\label{sec:advised-machine}

A key notion of this work is advice and advised machines.
A piece of \emph{advice} is given to an underlying machine in the form of \emph{advice string}, which is initially written on a (two-way) read-only advice tape.

Those machines equipped with advice are briefly called \emph{advised aux-2npda's}. Such an advised aux-2npda $M$ with an advice function $h$ is said to \emph{recognize} a language $L$ over alphabet $\Sigma$ if, for any actual input $x\in\Sigma^*$, $M$ reads $(x,h(|x|))$ written on an input tape and an advice tape and eventually halts by accepting (resp., rejecting) the input if $x\in L$ (resp., $x\in\overline{L}$), where $\overline{L}=\Sigma^*-L$.
By supplementing advice tapes to underlying aux-2npda's, we can define $\mathrm{NAuxPDA,\!\!TISP}(t(n),s(n))/\poly$ from $\mathrm{NAuxPDA,\!\!TISP}(t(n),s(n))$. Similarly, we obtain $\mathrm{UAuxPDA,\!\!TISP}(t(n),s(n))/\poly$ using advised aux-2upda's.

To improve the readability, in the rest of this work, we succinctly denote $\mathrm{NAuxPDA,\!\!TISP}(n^{O(1)},\log{n})/\poly$ by $\logcfl/\poly$ and  $\mathrm{UAuxPDA,\!\!TISP}(n^{O(1)},\log{n})/\poly$ by $\logucfl/\poly$.

\subsection{Promise Problems and Nonuniform Families}\label{sec:promise-problem}

Unlike the notion of languages, \emph{promise problems} over alphabet $\Sigma$ are formally of the form $(A,R)$ satisfying that $A,R\subseteq \Sigma^*$ and $A\cap R=\setempty$.
We say that a 1nfa (1ncta, 1npda, or 1npdcta) $M$ \emph{solves} $(A,R)$ if (i) for any $x\in A$, $M$ accepts $x$ (i.e., there exists an accepting computation path of $M$ on $x$) and (ii) for any $x\in R$, $M$ rejects $x$ (i.e., all computation paths of $M$ on $x$ are rejecting). Any string in $A\cup R$ is said to be \emph{promised} or \emph{valid}. Since we do not impose any further condition on all strings outside of $A\cup R$, it suffices to focus only on the promised strings in our later discussion.

For any nondeterministic machine models discussed in Section \ref{sec:numbers}, a machine is said to be  \emph{unambiguous} if it has at most one accepting computation path on each promised instance. For other instances, there is no restriction on the number of accepting/rejecting computation paths.

Throughout this work, we consider a ``family'' $\LL$ of promise problems $(L_n^{(+)},L_n^{(-)})$ over a common fixed alphabet $\Sigma$ indexed by natural numbers $n\in\nat$. Such a family $\LL$ is said to have a \emph{polynomial ceiling} if there exists a polynomial $p$ such that $L_n^{(+)}\cup L_n^{(-)}\subseteq \Sigma^{\leq p(n)}$ holds for all indices  $n\in\nat$.
Given a complexity class $\CC$ of families of promise problems, if we restrict our attention to only promise problem families in $\CC$ having a polynomial ceiling, then we obtain the subclass of $\CC$, expressed as $\CC/\poly$.
Moreover, when all promise problems are restricted to the ones over unary alphabets, we obtain the subclass $\CC/\mathrm{unary}$. Those exotic notations come from \cite{Kap09,Kap12} and are adopted in \cite{KP15,Yam19a,Yam21,Yam22a,Yam22b,Yam23b}.

A family $\MM= \{M_n\}_{n\in\nat}$ of 2nfa's (resp., 2npda's) over alphabet $\Sigma$ is of \emph{polynomial size} if there exists a polynomial $p$ satisfying $sc(M_n)\leq p(n)$ (resp., $ssc(M_n)\leq p(n)$) for all $n\in\nat$.
Similar notions are definable for other machine models, such as 2dfa's, 2ncta's, 2npda's, 2npdcta's, and 2dpdcta's.

Given a family $\LL=\{(L_n^{(+)},L_n^{(-)})\}_{n\in\nat}$ of promise problems, a family $\MM=\{M_n\}_{n\in\nat}$ of nondeterministic machines, and a polynomial $p$, we say that $M_n$
\emph{solves $(L_n^{(+)},L_n^{(-)})$ within time $p(n,|x|)$} if (1) for any $x\in L_n^{(+)}$, there exists an accepting computation path of $M_n$ on $x$  and (2) for any $x\in L_n^{(-)}$, all halting computation paths of $M_n$ on $x$ are rejecting.
Moreover, $\MM$ is said to
\emph{solve $\LL$ in polynomial time} if there is a polynomial $p$ such that, for all indices $n\in\nat$, $M_n$ solves $(L_n^{(+)},L_n^{(-)})$ within time $p(n,|x|)$.

We define $\twon$ as the collection of all families
of promise problems solvable by
nonuniform families of polynomial-size 2nfa's.
It is important to note that, as shown by Geffert et al. \cite{GMP07}, in the case of 2dfa's and 2nfa's, placing the ``polynomial time'' requirement does not change the above definitions of $\twod$ and $\twon$.
By further replacing 2nfa's with 2dfa's, 2dpda's, and 2npda's, we respectively obtain $\twod$, $\twodpd$, and $\twonpd$. The notation $\co\LL$ denotes the family $\{(L_n^{(-)},L_n^{(+)})\}_{n\in\nat}$ obtained from $\LL$.
Given a complexity class $\CC$ of promise problem families, such as $\twon$ and $\twonpd$, $\co\CC$ expresses the class $\{\co\LL\mid \LL\in \CC\}$. It follows that $\twod=\co\twod$ and $\twodpd=\co\twodpd$ by swapping between accepting and rejecting states of underlying machines.
In addition, the use of unambiguous 2nfa's and unambiguous 2npda's introduces the complexity classes $\twou$ and $\twoupd$, respectively.

\section{Power and Limitation of Multiple Counters}\label{sec:multiple-counters}

In this work, we intend to further equip finite and pushdown automata with multiple ``counters'' in order to enhance their computational power.
A \emph{counter} is a special kind of (pushdown) stack whose  alphabet consists only of a single symbol, say, ``$1$''  except for $\bot$; namely, $\Gamma=\{1,\bot\}$. The use of such counters were found in the literature (e.g., \cite{Min67}).

\subsection{Machines Equipped with Multiple Counters}

In this work, we freely equip multiple counters to finite automata and pushdown automata. For clarity, these machines are respectively called \emph{counter automata} and \emph{counter pushdown automata} in this work.
We conveniently abbreviate a \emph{two-way nondeterministic counter automaton} as a 2ncta and a \emph{two-way nondeterministic counter pushdown automaton} as a 2npdcta.
Formally, $k$-counter 2ncta and 2npdcta are respectively of the form $(Q,\Sigma,\{\rhd,\lhd\}, k, \{1,\bot\}, \delta,q_0,Q_{acc},Q_{rej})$ and  $(Q,\Sigma,\{\rhd,\lhd\}, \Gamma, k, \{1,\bot\}, \delta,q_0, \bot, Q_{acc},Q_{rej})$, where $\{1,\bot\}$ is the counter alphabet and $k$ indicates the number of counters in use.

We further expand the complexity classes $\twon$ and $\twonpd$ by modifying their underlying 2nfa's and 2npda's.
For each number $k\in\nat^{+}$, we define $\twonct_k$ and $\twonpdct_k$ using $k$-counter 2ncta's and $k$-counter 2npdcta's,\footnote{We remark that, with the use of multiple counters, it is possible to force 2ncta's and 2npdcta's to halt within polynomial time on \emph{all computation paths}.} respectively.
When $k=1$, we tend to drop the subscript ``$k$'' and write $\twonct$ and $\twonpd$ instead of $\twonct_1$ and $\twonpd_1$.
We also consider the multi-counter variants of $\twou$ and $\twoupd$, denoted by $\twouct_k$ and $\twoupdct_k$, respectively.
It then follows that
$\twod\subseteq \twodpd \subseteq \twoupd \subseteq \twoupdct$ and $\twou\subseteq \twon\subseteq \twonct \subseteq \twonpd \subseteq \twonpdct$.
Refer to Fig.~\ref{fig:hierarchy} for various inclusion relations among those complexity classes.

Here is one simple example of exhibiting the power of counters.

\begin{lemma}
$\twod\neq \twodct$, $\twou\neq \twouct$, and $\twon\neq \twonct$.
\end{lemma}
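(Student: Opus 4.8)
The plan is to produce a \emph{single} family $\LL$ of promise problems that lies in $\twodct$ yet escapes $\twon$. Because $\twod\subseteq\twou\subseteq\twon$ and, dually, a deterministic counter automaton is a special case of an unambiguous one, which is a special case of a nondeterministic one (so $\twodct\subseteq\twouct\subseteq\twonct$), such a family witnesses all three separations at once. Concretely I would take the length-independent (constant) family $\LL=\{(L_n^{(+)},L_n^{(-)})\}_{n\in\nat}$ defined, for every $n$, by $L_n^{(+)}=\{a^mb^m\mid m\geq 0\}$ and $L_n^{(-)}=\{a^ib^j\mid i,j\geq 0,\ i\neq j\}$, so that the valid instances are exactly $a^*b^*$ and the prototypical one-counter, non-regular language $\{a^mb^m\}$ is the positive part.

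For the membership $\LL\in\twodct$, I would exhibit one fixed deterministic single-counter automaton $M$ (used as $M_n$ for every $n$) that sweeps the input once: it pushes a $1$ for each $a$, pops a $1$ for each $b$, and accepts iff the counter returns to $\bot$ exactly as the right endmarker $\lhd$ is reached. This $M$ uses constantly many states (hence the family is of polynomial size), halts in linear time, and is deterministic, so $\LL\in\twodct$; being deterministic, it is a fortiori unambiguous and nondeterministic, giving $\LL\in\twouct$ and $\LL\in\twonct$ as well.

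For the non-membership $\LL\notin\twon$, the key observation is that the unadorned classes $\twon$, $\twou$, $\twod$ carry \emph{no} polynomial ceiling, so each machine in a solving family must handle valid instances of \emph{every} length. Hence, were $\LL\in\twon$, a single polynomial-size 2nfa, say $M_0$, would have to accept every $a^mb^m$ and reject every $a^ib^j$ with $i\neq j$, across all lengths. Invoking that the polynomial-time requirement is without loss of generality for 2nfa's (\cite{GMP07}), $M_0$ halts on every input, so the set it accepts is a genuine regular language $R$; but then $R\cap a^*b^*=\{a^mb^m\mid m\geq 0\}$ would be regular, contradicting the pumping lemma. Thus $\LL\notin\twon$, and therefore also $\LL\notin\twou$ and $\LL\notin\twod$. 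Combining the two parts yields $\twod\neq\twodct$, $\twou\neq\twouct$, and $\twon\neq\twonct$ simultaneously.

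The only genuinely delicate point — and the one I would state explicitly rather than gloss over — is this ``a single machine must decide the whole non-regular language'' step: it rests on the absence of any length bound in the plain classes, together with the classical fact that two-way nondeterministic finite automata recognize exactly the regular languages. No inductive counting or intricate simulation is needed here; the entire weight of the argument is in choosing a separating language that a lone counter handles trivially but that is provably non-regular, so I expect no real obstacle beyond recording these standard facts carefully.
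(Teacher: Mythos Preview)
Your proof is correct but takes a genuinely different route from the paper's. The paper builds a family $\LL_{EQ}$ that actually varies with the index: $L_n^{(+)}=\{w\#w\mid w\in\{0,1\}^{2^n}\}$ with $L_n^{(-)}$ the complement within strings of length $2\cdot 2^n+1$. Membership in $\twodct$ is shown by using the counter to record the tape-head position so as to compare the two halves symbol by symbol; non-membership in $\twon$ is obtained by first passing through the inclusion $\twon\subseteq 2^{\oned}$ and then a crossing-sequence/counting argument (a $2^{p(n)}$-state machine cannot separate $2^{2^n}$ candidates for $w$). Your argument instead exploits the freedom to take a \emph{constant} family: every $M_n$ may be the same finite automaton, and since a single 2nfa accepts only a regular language, intersecting with $a^*b^*$ would force $\{a^mb^m\}$ to be regular. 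This is shorter and more elementary, and it cleanly leverages the absence of any ceiling in the plain classes. What the paper's approach buys is a witness that genuinely depends on $n$ and whose instance lengths are exponential in $n$; this is closer in spirit to nonuniform state-complexity lower bounds and shows that the separation is not an artefact of allowing index-independent families. A small remark: your detour through \cite{GMP07} to force halting is harmless but not needed, since the set of strings with an accepting computation of any fixed 2nfa is already regular regardless of halting behaviour, and the paper's ``reject'' condition only requires that no accepting path exist.
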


\begin{proof}
Our goal is to prove that $\twodct\nsubseteq \twon$ because we immediately obtain a contradiction from any of the following equalities: $\twod=\twodct$, $\twou=\twouct$, and $\twon=\twonct$.
Let us consider $\LL_{EQ}=\{(L_n^{(+)},L_n^{(-)})\}_{n\in\nat}$ of promise problems defined by $L_n^{(+)} = \{w\# w\mid w\in\{0,1\}^{m(n)}\}$, where $m(n)=2^{n}$, and $L_n^{(-)}=\{0,1\}^{2m(n)+1} - L_n^{(+)}$ for each number $n\in\nat$. We then claim that $\LL_{EQ}$ is in $\twodct$ by storing the information on the tape head location using a counter. More precisely, the following procedure recognizes the promise problem $(L_n^{+)},L_n^{(-)})$.
\vs{-1}
\begin{quote}
Let an input $x$ have the form $u\# v$. Now, assume that a tape head is at cell $0$ and that CT1 contains $1^m$ for $m\geq0$. (1) Push ``1'' into CT1. (2) Move a tape head to cell $m+1$ by decrementing CT1 one by one.
(3) Read a tape symbol, say, $a$ at cell $m+1$, remember it in inner states, and then increment CT1 by moving the tape head back to cell $0$.
(4) Move the tape head back to $\#$.
(a) If $a=\#$, check if $|v|=m+1$ by decrementing CT1. If so, accept and halt. Otherwise, reject. (b) If $a\neq \#$, then
move the tape head further by decrementing CT1. During this process, if the tape head reaches $\lhd$, then reject. Assume otherwise.
(5) At the time when CT1 is empty, read a tape symbol, say, $b$. If $a\neq b$, then reject and halt. Otherwise, increment CT1 by moving the tape head back to $\#$. Move it further to cell $0$ and then go to (1).
\end{quote}
\vs{-1}

On the contrary, we intend to prove that $\LL_{EQ}$ is not in $\twon$.
Since $\twon\subseteq 2^{\oned}$ \cite{Kap09,Kap12}, it actually suffices to show that $\LL_{EQ}\notin 2^{\oned}$. We intend to employ a simple counting argument.
Assume that there exists a family $\{M_n\}_{n\in\nat}$ of polynomial-size 2dfa's $M_n$ solving $\LL$ with $Q_n$, where $|Q_n|$ is upper-bounded by $2^{p(n)}$ for a suitable polynomial $p$.
Choose a sufficiently large number $n$ in $\nat$ satisfying that $p(n)< m(n)$.
Let $w\#w$ denote an arbitrary input with $|w|=m(n)$. We call by a \emph{block} each series of tape cells containing $w$ in $w\#w$. Whenever the tape head crosses over the middle separator $\#$, $M_n$'s tape head carries only one inner state from the left block to the right one.
The total number of distinct strings $w$ of length $m(n)$ is exactly $2^{m(n)}$ whereas the total number of inner states carried over $\#$ between the two blocks is at most $|Q_n|$.
Since $|Q_n|<2^{m(n)}$, there exist two distinct strings $w_1$ and $w_2$ of length $m(n)$ for which $M_n$ enters the same inner state, say, $q$ after reading $w_1$ and $w_2$ in the first blocks of $w_1\#w_1$ and $w_2\#w_2$.   Since $w_1\# w_1$ and $w_2\# w_2\in L_n^{+}$, the string $w_1\# w_2$ is also accepted by $M_n$. This is a contradiction.
\end{proof}

\subsection{Reducing the Number of Counters in Use}\label{sec:reducing-counter}

It is possible to reduce the number of counters in use on multi-counter automata and multi-counter pushdown automata.
Minsky \cite{Min67} earlier demonstrated how to simulate a Turing machine on a 2-counter automaton with exponential overhead. Since we cannot use the same simulation technique due to its large overhead, we need to take another, more direct approach toward $\twonct_k$ and $\twonpdct_k$.
Even without the requirement of polynomial ceiling, it is possible in general to reduce the number of counters in use down to ``4'' for 2ncta's and ``3'' for 2npdcta's as shown below.

\begin{proposition}\label{reducing-counter}
For any constants $k,k'\in\nat$ with $k\geq4$ and $k'\geq3$,  $\twonct_k=\twonct_{4}$ and $\twonpdct_{k'}=\twonpdct_{3}$.
The same holds for the deterministic case.
\end{proposition}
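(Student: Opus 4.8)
The plan is to simulate the $k$ virtual counters by a single \emph{packed} counter together with a small fixed pool of auxiliary counters, using a mixed-radix (base-$N$) encoding with \emph{polynomial} rather than exponential overhead, thereby avoiding the exponential blow-up of the classical two-counter simulation. Since the underlying machines run in polynomial time, every virtual counter value stays bounded by a polynomial $p(n,|x|)$; fix $N=p(n,|x|)+1$ as a \emph{strict} upper bound on all values occurring throughout the computation. Encode the virtual values $(c_1,\dots,c_k)$ as the single integer $v=\sum_{j=1}^{k}c_j N^{j-1}$ and keep $v$ in one designated counter. Because $N$ strictly dominates every $c_j$, no increment or decrement of a virtual counter ever triggers a carry or borrow across digit positions, which is precisely what keeps the arithmetic clean.

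Next I would implement the three primitives on each virtual counter $c_i$ (the index $i\in[k]$ being hard-wired into the finite control, which is legitimate as $k$ is constant): incrementing $c_i$ adds $N^{i-1}$ to $v$, decrementing subtracts $N^{i-1}$, and the zero-test on $c_i$ asks whether the $i$-th base-$N$ digit of $v$ vanishes. Each reduces to grade-school arithmetic realized by repeated addition and subtraction: the constant $N^{i-1}$ is produced on demand by a multiply-by-$N$ loop iterated $i-1$ times (with $N$ itself regenerated from $|x|$ by sweeping the input tape and evaluating the fixed polynomial), and the zero-test reduces to checking whether $v \bmod N^{i}$ is below $N^{i-1}$, computed by repeated subtraction while a spare counter preserves a copy of $v$ so that the packed counter can be restored. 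All scratch counters are returned to $0$ at the end of each primitive, so between two consecutive simulated steps the configuration is clean: $v$ sits in its counter and every other counter is empty.

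The crucial accounting is to verify that this pool of scratch counters never exceeds three, giving four counters in total and hence $\twonct_k=\twonct_4$ for $k\ge4$ (the reverse inclusion $\twonct_4\subseteq\twonct_k$ being trivial). One has to lay out the subroutines so that at any instant only a copy of $v$, one power-of-$N$ register, and a single working register are simultaneously live; I expect this \emph{carry-free, three-scratch-register} bookkeeping to be the main obstacle, since the exact constant $4$ (rather than some larger constant) hinges on reusing registers aggressively and on the fact that multiplication and division by $N$ each need only two auxiliaries. Polynomiality is then routine: each primitive costs $\poly(n,|x|)$ steps, there are polynomially many simulated steps, $v<N^{k}$ stays polynomially bounded, and the state (resp. stack-state) complexity grows by only the constant factor needed to encode the finitely many modes and the index $i$; no input is altered, so a polynomial ceiling, if present, is preserved.

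For the pushdown case I would run exactly the same four-counter simulation but realize one of the three scratch registers by the genuine stack: before a primitive, push a fresh separator $\#$ and treat the unary region above it as a counter (``push $1$'' to increment, ``pop $1$'' to decrement, ``top is $\#$'' to zero-test), then pop back down to and including $\#$ when the primitive finishes, so the authentic pushdown content below $\#$ is restored untouched. Since the scratch region is needed only transiently inside a single primitive while the simulated machine's own stack moves occur only at the top level, the two uses of the stack never interfere; this frees one counter and yields $\twonpdct_{k'}=\twonpdct_3$ for $k'\ge3$. Finally, every subroutine above is deterministic, so the identical construction gives $\twodct_k=\twodct_4$ and $\twodpdct_{k'}=\twodpdct_3$, establishing that the same holds for the deterministic case.
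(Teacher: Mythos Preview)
Your proposal is correct and follows essentially the same strategy as the paper: encode counter values in a single packed counter via a polynomial-radix representation, maintain three reusable helper counters for the arithmetic, and in the pushdown case borrow the stack (above a fresh separator $\#$) as one of the helpers to drop from four to three. The one structural difference is that you pack all $k$ virtual counters into a single base-$N$ integer $v=\sum_j c_j N^{j-1}$ in one shot, whereas the paper isolates a reusable two-to-one lemma (Lemma~\ref{claim-reduction}, using the pairing $\langle i_1,i_2\rangle_p=i_1 p+i_2$) and applies it iteratively, halving the number of live counters each round until one remains alongside the three helpers. Your direct encoding is conceptually tidier, but, as you correctly flag, the three-scratch bookkeeping for testing an \emph{arbitrary} digit $c_i$---which simultaneously requires producing $N^{i-1}$, preserving $v$, and tracking a quotient modulo $N$---is delicate and not fully discharged in your sketch; the paper's pairwise reduction sidesteps exactly this difficulty, since with only two digits present the zero-tests reduce to ``is the packed value $<p$?'' and ``is it $\equiv 0\pmod p$?'', both handled explicitly in the Appendix via the running invariants $\mathrm{CT1}+\mathrm{CT2}=p$ and $\mathrm{CT3}+\mathrm{CT4}=\langle i_1,i_2\rangle_p$. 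So the paper buys a cleaner register-count argument at the price of an outer iteration; your version buys directness at the price of leaving the tight register accounting as an exercise.
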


A core of the proof of Proposition \ref{reducing-counter} is the following lemma on the simulation of every pair of counters by a single counter with the heavy use of an appropriately defined ``pairing''  function. Let $\MM=\{M_n\}_{n\in\nat}$ denote any polynomial-size family of 2ncta's or of 2npdcta's running in time, in particular, $(n|x|)^t$ for a fixed constant $t\in\nat^{+}$. This $\MM$ satisfies the following lemma. See also \cite{Yam23c} for the reduction of the number of counters in the uniform setting.

\begin{lemma}\label{claim-reduction}
There exists a fixed deterministic procedure by which any single move of push/pop operations of two counters of $M_n$ can be simulated by a series of operations with one counter with the help of 3 extra counters. These extra 3 counters are emptied after each simulation and thus they are reusable for any other purposes. If we freely use a stack during this simulation procedure, then we need only two extra counters instead of three. The state complexity of the procedure is $n^{O(1)}$.
\end{lemma}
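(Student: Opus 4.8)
The plan is to encode the joint contents $(a,b)$ of the two designated counters of $M_n$ as a single natural number kept in one counter, via a pairing function, and then realize each primitive counter instruction on the two ``virtual'' counters (increment, decrement, and the emptiness test that inspects whether the top symbol is $\bot$) by a fixed \emph{decode--operate--encode} routine acting on this single counter together with a constant number of scratch counters. The pairing function must be chosen so that the encoded value stays polynomially bounded whenever $a$ and $b$ are: since $M_n$ runs in time $(n|x|)^t$, both $a$ and $b$ are bounded by $(n|x|)^t$, so I would take the Cantor pairing $\langle a,b\rangle=\binom{a+b+1}{2}+b=\frac{(a+b)(a+b+1)}{2}+b$, whose value is $O((n|x|)^{2t})$. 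This is the crucial point: a multiplicative code such as $2^{a}3^{b}$ would force exponential counter heights and wreck the polynomial time and height budget, whereas the quadratic Cantor code keeps every counter polynomially tall, so each simulated step costs only polynomially many real steps.

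First I would implement the decoder. Writing $V=\langle a,b\rangle$ and $s=a+b$, one has $V=T_s+b$ with $T_s=\binom{s+1}{2}$ and $0\le b\le s$, so $s$ is the largest index with $T_s\le V$. The routine strips triangular layers off $V$, subtracting $1,2,3,\dots$ in turn: a helper counter tracks the current diagonal $s$ while $V$ is drained layer by layer, and when the next subtraction would overshoot (detected because $V$ runs out before the current layer is fully subtracted) the loop halts, at which point the value of the helper and the residue determine $s$ and $b$, and hence $a=s-b$. These are written into two dedicated scratch counters $A$ and $B$ holding $a$ and $b$. Operating is then immediate -- increment, decrement, or emptiness-test the appropriate one of $A,B$ -- and the encoder is the exact inverse, rebuilding $V=\langle a,b\rangle$ from $A$ and $B$ by re-adding the triangular layers with the same helper. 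A direct implementation thus uses three scratch counters (the two coordinate counters $A,B$ and one arithmetic helper) beyond the single encoding counter; crucially, the encoder drains $A,B$ and the helper back to empty, so all three are reset after every step and reusable, exactly as the statement demands.

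When a genuine stack is available (the 2npdcta case), I would offload the arithmetic helper onto the stack -- pushing one symbol per completed layer and popping to replay the layers during re-encoding -- which removes one scratch counter and brings the requirement down from three to two, matching the stated bounds. Finally, the whole gadget is a fixed finite-state controller whose size is independent of $n$, contributing only $O(1)$ states, so composing it with the transition table of $M_n$ preserves the $n^{O(1)}$ bound on state complexity. The step I expect to be the main obstacle is the boundary bookkeeping in the decoder: cleanly detecting the overshoot of the last triangular layer, recovering the exact residue $b$ and the diagonal $s$ from the partial state at that moment, and guaranteeing that after every decode--operate--encode cycle all scratch counters (and, in the stack variant, the stack) are returned to their initial empty configuration so that the simulation can be iterated faithfully. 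Getting this reset-and-restore discipline exactly right, rather than the choice of encoding, is where the care lies.
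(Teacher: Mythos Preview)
Your plan is correct, and the Cantor pairing with a decode--operate--encode cycle does fit within the stated counter budget (one encoding counter plus three scratch counters, or two scratch counters when a stack is available). The layer-stripping decoder and its inverse can indeed be implemented so that all scratch registers are empty at the end of each cycle; the boundary case you flag (detecting the overshoot on layer $s+1$ and reading off $b$ and $a=s-b$ from the partial transfer) works out cleanly once one tracks the invariant that at the moment of overshoot the two live scratch counters hold $b$ and $s+1-b$.

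The paper, however, takes a different and somewhat simpler route. Instead of the Cantor pairing it uses the base-$p$ code $\langle i_1,i_2\rangle_p = i_1\cdot p + i_2$, where $p=(n|x|)^k$ is the time bound itself. The crucial trick is that $p$ can be \emph{regenerated on demand} by sweeping the input tape $n$ times to produce $n|x|$ and then iterating; one of the three extra counters is dedicated to holding $p$, and the emptiness test for $i_1$ is just repeated subtraction of $p$ (i.e., integer division), while increment/decrement of $i_1$ is $\pm p$ and of $i_2$ is $\pm 1$. There is no full decode: the paper never materializes $i_1$ and $i_2$ separately, it only tests and updates the code in place. This buys very simple constant-modulus arithmetic at the cost of tying the gadget to the input (and to $n$, which is why the stated state bound is $n^{O(1)}$ rather than $O(1)$). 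Your Cantor-based gadget is input-independent and has constant state complexity, which is arguably cleaner, but the triangular-layer bookkeeping is fussier than the paper's fixed-stride subtractions. Both approaches handle the stack variant identically, by using the stack (above a fresh separator) as a surrogate for one scratch counter.
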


For readability, we postpone the proof of Lemma \ref{claim-reduction} until Appendix and continue the proof of Proposition \ref{reducing-counter}.


\vs{-2}
\begin{proofof}{Proposition \ref{reducing-counter}}
We first look into the case of $\twonct_k$ for every index $k\geq4$ and wish to prove that $\twonct_{k} \subseteq  \twonct_{4}$.

Let $\MM=\{M_n\}_{n\in\nat}$ denote any nonuniform family of polynomial-size $k$-counter 2ncta's running in polynomial time, where $M_n$ has the form $(Q_n,\Sigma,k,\{1,\bot\}, \delta_n,q_{0,n}, Q_{acc,n},Q_{rej,n})$. Take two polynomials $p_1$ and $p_2$ such that $|Q_n|\leq p_1(n)$ and $M_n$ halts within $p_2(n,|x|)$ steps for any number $n\in\nat$ and any string $x\in\Sigma^*$.

In what follows, we fix $n$ and $x$ arbitrarily. We abbreviate $p_2(n,p_1(n))+1$ as $p$ and consider the pairing function $\pair{i_1,i_2}_{p}$ defined as $\pair{i_1,i_2}_p = i_1\cdot p + i_2$ for any $i_1,i_2\in[0,p-1]_{\integer}$.

We first group together $2\cdot\floors{k/2}$ counters into pairs and apply Lemma \ref{claim-reduction} to simulate each pair of counters by a single counter with the help of three extra reusable counters, say, CT1--CT3. This process successfully eliminates $\floors{k/2}$
counters except for CT1--CT3.
We repeat this process until there remains one counter other than CT1--CT3. Since there are the total of four counters left unremoved, this shows that $\twonct_k\subseteq \twonct_4$.

Next, we intend to show that $\twonpdct_{k'}\subseteq \twonpdct_3$ for any $k'\geq3$. In this case, we follow the same argument as described above
by removing all counters except for one counter and CT1--CT3. Finally, four counters are left unremoved. As shown in Lemma \ref{claim-reduction}, we can further reduce the number of counters to three. This is because we can utilize a stack, which is originally provided to an underlying pushdown automaton.
\end{proofof}


It is not clear that ``4''  and ``3''  are the smallest numbers supporting Proposition  \ref{reducing-counter} for 2ncta's and 2npdcta's, respectively. We may conjecture that $\twonct_i \neq  \twonct_{i+1}$ and $\twonpdct_j \neq \twonpdct_{j+1}$ for all $i\in[3]$ and $j\in[2]$.

\subsection{Reachability by Multi-Counter Automata Families}\label{sec:reachability}

Let us consider the question raised in Section \ref{sec:state-complexity} on the closure property under complementation,
namely, the $\twon=?\co\twon$ question. Unfortunately, we do not know its answer at this moment.
With the presence of ``counters'', however, it is possible to
provide a complete solution to this question.

\begin{theorem}\label{counter-four}
For any constant $k\geq4$, $\co\twonct_k \subseteq \twonct_4$. Thus, $\twonct_4 = \co\twonct_4$ follows.
\end{theorem}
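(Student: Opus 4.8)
The plan is to adapt the inductive counting technique of Immerman and Szelepcs\'enyi (the method behind $\nl=\co\nl$) to nonuniform families of counter automata, exploiting the fact that a polynomial-time $k$-counter 2ncta has only polynomially many reachable configurations. Fix $\LL=\{(L_n^{(+)},L_n^{(-)})\}_{n\in\nat}\in\twonct_k$ witnessed by a polynomial-size family $\MM=\{M_n\}_{n\in\nat}$ of $k$-counter 2ncta's running within time $p_2(n,|x|)$, and set $T=p_2(n,|x|)$; by Proposition \ref{reducing-counter} I may assume each $M_n$ operates only four counters. I want to build a family $\NN=\{N_n\}_{n\in\nat}$ of counter automata solving $\co\LL=\{(L_n^{(-)},L_n^{(+)})\}_{n\in\nat}$. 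By the definition of $\co\LL$, on valid inputs $N_n$ must accept exactly when $M_n$ has \emph{no} accepting computation path on $x$. Since the construction will use some constant number $k'\geq4$ of counters, a final appeal to Proposition \ref{reducing-counter} gives $\co\LL\in\twonct_{k'}=\twonct_{4}$, hence $\co\twonct_k\subseteq\twonct_4$; applying $\co$ to the special case $k=4$ then yields $\twonct_4=\co\twonct_4$.

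I would work with the computation graph $G_x$ of $M_n$ on $x$ whose vertices are the configurations $(q,h,v_1,v_2,v_3,v_4)$ with $q\in Q_n$, head position $h\in[0,|x|+1]_{\integer}$, and counter contents $v_j\in[0,T]_{\integer}$; because $sc(M_n)$ and $T$ are polynomially bounded, $G_x$ has only $C=n^{O(1)}$ vertices, and its edges are read off from $\delta_n$. Writing $R_i$ for the set of configurations reachable from the initial configuration within $i$ steps and $r_i=|R_i|$, I would compute $r_0=1,r_1,\ldots,r_T$ inductively. Membership $c\in R_i$ is tested by guessing a path of length $\le i$ from the initial configuration and checking it ends at $c$, while non-membership is certified by the usual device of enumerating all configurations, confirming that exactly $r_{i-1}$ of them lie in $R_{i-1}$ (each via a guessed path), and observing that $c$ is neither among them nor a one-step successor of any of them. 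At the final level $N_n$ enumerates all configurations, reconfirms that exactly $r_T$ of them are reachable, and accepts iff none of these is an accepting configuration of $M_n$. Only the computation branch that guesses every count correctly survives to this final test, and it decides reachability correctly; every branch that guesses a count incorrectly aborts and rejects. Thus $N_n$ has an accepting path precisely when no accepting configuration is reachable, i.e.\ when $M_n$ rejects $x$.

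To realize this on a counter automaton I would store each configuration's head position and its four counter values as nonnegative integers on dedicated counters, keeping the inner state in the finite control; the level index $i$, the counts $r_{i-1}$ and $r_i$, and the running tally of confirmed configurations each occupy one further counter. Enumerating configurations, copying and comparing stored values, and incrementing or decrementing the various counts are all carried out by the standard decrement-and-restore manipulations, each costing time $O(T)$, so the nested loops (levels, target configurations, enumerated configurations, guessed paths) run in time $\mathrm{poly}(n,|x|)$. Because those loops track only constantly many elements of $Q_n$ together with finitely many flags, $sc(N_n)=n^{O(1)}$, and the explicit loop bounds force halting on every computation path. As only a constant number $k'$ of counters is used, Proposition \ref{reducing-counter} compresses them down to four.

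The delicate point, and the step I expect to demand the most care, is the faithful simulation of a single transition of $M_n$ while its configuration lives on several counters: applying $\delta_n$ requires the currently scanned input symbol, obliging $N_n$ to drive its single physical tape head to the stored head position and back, together with the zero/nonzero status of each of $M_n$'s four counters, which must be read off the stored values \emph{nondestructively} (by decrementing into a scratch counter and restoring) before the push/pop dictated by $\delta_n$ is applied. Orchestrating these head movements and counter tests, and the copying needed to juxtapose a guessed configuration against the target $c$, all within a fixed counter budget, is the technical heart of the argument; once it is in place, the inductive counting of the second paragraph together with the counter reduction of Proposition \ref{reducing-counter} combine to give the theorem.
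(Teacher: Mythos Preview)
Your proposal is correct and follows essentially the same approach as the paper: the paper also proves the theorem by combining the Immerman--Szelepcs\'enyi inductive counting technique (stated separately as Lemma~\ref{complement-upper-bound}, yielding $\co\twonct_k\subseteq\twonct_{5k+13}$) with Proposition~\ref{reducing-counter} to collapse the counter count to four. The only cosmetic difference is ordering---the paper applies Proposition~\ref{reducing-counter} after the inductive counting rather than before---and your use of ``reachable within $i$ steps'' versus the paper's ``reachable in exactly $i$ steps,'' both of which are standard variants of the same argument.
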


A key to the proof of Theorem \ref{counter-four} is the following lemma.

\begin{lemma}\label{complement-upper-bound}
For any constant $k\in\nat^{+}$, $\co\twonct_k\subseteq \twonct_{5k+13}$.
\end{lemma}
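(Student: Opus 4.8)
The plan is to adapt the inductive counting technique of Immerman and Szelepcs\'enyi to the computation graph of the underlying counter automata. Fix $\LL=\{(L_n^{(+)},L_n^{(-)})\}_{n\in\nat}\in\twonct_k$ solved by a polynomial-size family $\MM=\{M_n\}_{n\in\nat}$ of $k$-counter 2ncta's running in time $p_2(n,|x|)$, with $|Q_n|\leq p_1(n)$. Since $M_n$ halts on every computation path within $T=p_2(n,|x|)$ steps, each reachable configuration of $M_n$ on $x$ is a tuple $(q,h,c_1,\ldots,c_k)$ with $q\in Q_n$, head position $h\in[0,|x|+1]_{\integer}$, and counter contents $c_1,\ldots,c_k\in[0,T]_{\integer}$; hence the computation graph $G_{n,x}$, whose edges are the one-step transitions allowed by $\delta_n$, has only $N\leq p_1(n)\cdot(|x|+2)\cdot(T+1)^k$ vertices, a polynomial in $n$ and $|x|$. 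After a standard clean-up (emptying all counters and returning the head to $\rhd$ upon entering an accepting state) we may assume a single accepting configuration $v_{acc}$ and a single start configuration $v_0$. On every valid $x$, $M_n$ accepts iff $v_{acc}$ is reachable from $v_0$ in $G_{n,x}$; consequently $\co\LL$ is solved by any machine that, on valid $x$, accepts exactly when $v_{acc}$ is \emph{not} reachable.

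I would build $M'_n\in\twonct_{5k+13}$ that certifies non-reachability by inductive counting. Writing $R_i$ for the set of configurations reachable from $v_0$ in at most $i$ steps and $r_i=|R_i|$, we have $r_0=1$, and $R_i$ stabilizes once $i\geq N$. The machine computes $r_{i+1}$ from $r_i$ nondeterministically: it loops over every candidate vertex $v$ (enumerated odometer-style through its head position and counter components, its inner state carried in $M'_n$'s finite control), and for each $v$ it runs the usual membership subroutine, a second loop over every vertex $u$ in which it nondeterministically either skips $u$ or guesses a computation path of length at most $i$ from $v_0$, simulates $M_n$ along it, and checks that the simulation halts exactly at $u$; each confirmed $u$ increments a ``verified'' counter and, whenever $u=v$ or $\delta_n$ provides an edge $u\to v$, a flag is raised. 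A branch aborts unless the verified counter finally equals $r_i$, which forces it to have exhausted $R_i$; the flag then records precisely whether $v\in R_{i+1}$, so tallying raised flags yields $r_{i+1}$. Iterating until the count stabilizes produces $r_N$, after which one final pass of the same subroutine---accepting only if the verified count reaches $r_N$ while $v_{acc}$ is never confirmed---certifies $v_{acc}\notin R_N$. Soundness of the counting guarantees that, on a valid $x$ with $v_{acc}$ reachable, no branch reaches the accepting state, whereas if $v_{acc}$ is unreachable the correct-guess branch accepts; thus $M'_n$ solves $\co\LL$.

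The counter budget is where the constant $5k+13$ arises. The single physical head of $M'_n$ is occupied by the path simulation, so every other live configuration must store its head position in a dedicated counter and its $M_n$-counter contents in $k$ dedicated counters. At the innermost point the live configurations are the target $v$, the candidate $u$, the current configuration of the path simulation, and two scratch configurations used to copy and to compare counter contents nondestructively (counters admit only destructive decrement, so equality tests and restores need a spare $k$-tuple); this accounts for the $5k$ term. The rest uses a bounded set of single counters: $r_i$ and the freshly built $r_{i+1}$, the verified tally, the step index $i$, the length gauge for the guessed path, three head-position counters for the non-physical configurations, and a handful of scratch counters for the arithmetic primitives (copying, comparing, odometer increment, and reading off $\delta_n$), all emptied and reused between primitives exactly as in Lemma~\ref{claim-reduction}. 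These total at most $13$, giving $5k+13$ counters. Since every loop ranges over at most $N$ vertices and each path simulation takes at most $N$ steps, $M'_n$ runs in polynomial time and its state complexity is $n^{O(1)}$.

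The main obstacle is precisely this resource accounting rather than the high-level logic. Because a counter can be read only by decrementing it to zero, each primitive---restoring a saved configuration, testing two configurations for equality, or deciding whether $\delta_n$ yields an edge $u\to v$---must be realized so that it neither destroys the configurations that have to survive it nor transiently exceeds the $5k+13$ bound; verifying that the two scratch tuples make every copy and comparison reversible, and that all auxiliary counters are genuinely emptied before reuse, is the delicate part. A secondary point to check is that the clean-up forcing a unique $v_{acc}$, together with the all-paths halting guarantee for 2ncta's noted earlier, keeps $G_{n,x}$ finite and the induction bounded by a polynomial, so that soundness of the counting holds on every promised instance.
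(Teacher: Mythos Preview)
Your proposal is correct and follows essentially the same route as the paper: both run the Immerman--Szelepcs\'enyi inductive counting procedure directly on the configuration graph of the $k$-counter automaton, storing the counter contents and head position of each live configuration in dedicated auxiliary counters and tallying reachable configurations level by level. The only cosmetic differences are that the paper counts configurations reachable in \emph{exactly} $i$ steps rather than at most $i$, and its counter bookkeeping is organized slightly differently (arriving at $5(k+1)+8=5k+13$ via named groups CT1--CT5, eCT1--eCT$k$, their primed copies, etc.); your identification of the delicate part---nondestructive copy/compare of counter tuples within the budget---is exactly the crux the paper's argument also hinges on.
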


Theorem \ref{counter-four} follows, as shown below, from this lemma with the help of Proposition \ref{reducing-counter}.

\vs{-2}
\begin{proofof}{Theorem \ref{counter-four}}
The proof of Theorem \ref{counter-four} is described as follows. By Proposition \ref{reducing-counter}, it suffices to consider the case of $k=4$.
We then obtain $\co\twonct_4\subseteq \twonct_{33}$ by Lemma  \ref{complement-upper-bound}. Proposition \ref{reducing-counter} again leads to  $\twonct_{33}\subseteq \twonct_{4}$. Therefore, $\co\twonct_4\subseteq \twonct_4$ follows. By taking the ``complementation'' of the both sides of this inclusion, we also obtain $\twonct_4\subseteq \co\twonct_4$. The second part of the theorem is thus obtained.
\end{proofof}

To prove Lemma \ref{complement-upper-bound}, nonetheless, we wish to use an algorithmic technique known as \emph{inductive counting}. This intriguing technique was discovered independently by Immerman \cite{Imm88} and Szelepcs\'{e}nyi \cite{Sze88} in order to prove that $\nl=\co\nl$.

For the description of the proof of  Lemma \ref{complement-upper-bound}, we introduce the following notion of (surface) configurations for a family $\{M_n\}_{n\in\nat}$ of the $n$th $k$-counter 2ncta $M_n$ running in time polynomial, say, $r(n,|x|)$.
A \emph{(surface) configuration} of $M_n$ on input $x$ is of the form $(q,l,\vec{m})$ with $q\in Q$,  $l\in[0,|x|+1]_{\integer}$, and $\vec{m}=(m_1,m_2,\ldots,m_k) \in \{1,\bot\}^k$. This form indicates that $M_n$ is in inner state $q$, scanning the $l$th tape cell, and the $M_n$'s $i$th counter holds a symbol $m_i$ in the top cell for each index $i\in[k]$.
We abbreviate as $CONF_{n,x}$ the configuration space $Q_n\times [0,|x|+1]_{\integer}\times \{1,\bot\}^k)$.
Given two configurations $c_1$ and $c_2$ of $M_n$ on $x$, the notation $c_1\vdash_{x} c_2$ means that $c_2$ is ``reachable'' from $c_1$ by making a single move of $M_n$ on $x$. We abbreviate as $c_1\vdash_x^{t-1}c_t$ a chain of transitions $c_1\vdash_x c_2\vdash_x \cdots \vdash_x c_t$.
Moreover, $\vdash^*_x$ denotes the transitive closure of $\vdash_x$.


\vs{-2}
\begin{proofof}{Lemma \ref{complement-upper-bound}}
Let $k\in\nat^{+}$ and let $\LL=\{(L_n^{(+)},L_n^{(-)})\}_{n\in\nat}$ be any family of promise problems in $\co\twonct_k$.
Since $\co\LL\in\twonct_k$, there is a nonuniform family $\{M_n\}_{n\in\nat}$ of polynomial-size $k$-counter 2ncta's that solves $\co\LL$.
Let $r$ denote a function whose value $r(n,|x|)$ upper-bounds the runtime of $M_n$ on input $x$.
Clearly, each counter holds only a number between $0$ and $r(n,|x|)$.

Our goal is to build, for each index $n\in\nat$, a $(5k+12)$-counter 2ncta $P_n$, which  solves  the promise problem $(L_n^{(+)},L_n^{(-)})$.
A basic idea of constructing such a machine $P_n$ is to provide a procedure of nondeterministically deciding whether $M_n$ rejects input $x$; in other words, whether all computation paths of $M_n$ on $x$ end with only  non-accepting inner states.
For this purpose, we need to ``count'' the number of rejecting computation paths of $M_n$ on $x$ because, if this number matches the total number of computation paths, then we are sure that $M_n$ rejects $x$.
From this follows $\LL\in\twonct_{5k+12}$.

We arbitrarily fix a number $n$ and a valid input $x$.
In what follows, we deal with (surface) configurations of the form $(q,l,\vec{m})$ in $CONF_{n,x}$.
It is important to note that, with the use of additional $k+1$ counters, we can ``enumerate'' all elements in $CONF_{n,x}$, ensuring a linear order on $CONF_{n,x}$.
This fact makes it possible for us to select the elements of $CONF_{n,x}$ sequentially one by one in the following construction of $P_n$.
For each number $i\in[0,r(n,|x|)]_{\integer}$, we define $V_i = \{(q,l,\vec{m})\in CONF_{n,x} \mid (q_0,0,\vec{\bot})\vdash_x^i (q,l,\vec{m}) \}$ and set $N_i=|V_i|$, where $\vec{\bot}=(\bot,\bot,\ldots,\bot)$.
Since $0\leq N_i\leq |CONF_{n,x}|=|Q_n| (r(n,|x|)+1)^k (|x|+2)$,
we wish to calculate the value $N_{r(n,|x|)}$ by inductively calculating each value $N_i$ for $i\in[0,r(n,|x|)]_{\integer}$ using additional counters.

(1)
Hereafter, we intend to calculate $N_i$ inductively in the following fashion. Let $i$ denote an arbitrary number in $[0,r(n,|x|)]_{\integer}$. We need another counter, say, CT1 to remember this value $i$ and remember the value $N_i$ using an additional counter, say, CT2.
When $i=0$, $N_0$ clearly equals $1$. Assume that $1\leq i\leq r(n,|x|)$.
We use two parameters $c$ and $d$, ranging over  $[0,|CONF_{n,x}|]_{\integer}$, whose values are stored into two extra counters, say, CT3 and CT4. Furthermore, we need to remember the current location of $M_n$'s tape head using another counter, say, CT5. To hold $\vec{m}$, we only need extra $k$ counters, called eCT1--eCT$k$.
Most importantly, during the following inductive procedure, we must empty the additionally introduced counters after each round and reuse them to avoid a continuous introduction of new  counters.

(2)
Initially, we set $c=0$ in CT3.
In a sequential way described above, we pick the elements $(q,l,\vec{m})$ from $CONF_{n,x}$ one by one. We store each picked element $(q,l,\vec{m})$ into CT5 and eCT1--eCT$k$, where $q$ is remembered in the form of inner states. For each element $(q,l,\vec{m})$, we nondeterministically select either the process (a) or the process (b) described below in (3), and execute it.
After all elements $(q,l,\vec{m})$ are selected sequentially and either (a) or (b) is executed properly, we define $\hat{N}_i$ to be the current value of $c$ (by moving the content of CT3 into CT2 to empty CT3).

(3)
To utilize the stored values of $i$, $\hat{N}_i$, $l$, and $\vec{m}$, however, we need to copy them into extra $k+3$ counters, say, CT1$'$, CT2$'$, CT5$'$, and eCT1$'$--eCT$k'$ (by bypassing another extra counter, say, CT10 to transfer counter contents) and use these copied counters in the following process.

(a) Choose nondeterministically a computation path, say, $\gamma$ of $M_n$ on $x$ and check whether  $(q_0,0,\vec{0})\vdash_x^i (q,l,\vec{m})$ is true on  this path $\gamma$. This is done by first returning a tape head to the start cell (i.e., the leftmost tape cell), preparing additional $k+1$ counters, say, CT5$''$ and eCT1$''$--eCT$k''$, initializing them (by emptying them), and then simulating $M_n$ on $x$ for the first $i$ steps using CT1$'$ and those new counters.
Let $(p,h,\vec{s})$ denote the configuration reached after $i$ steps from $(q_0,0,\vec{0})$. We then compare between $(p,h,\vec{m})$ and $(q,l,\vec{m})$ by simultaneously decrementing the corresponding counters CT5$'$, eCT1$'$--eCT$k'$, CT5$''$, and eCT1$''$--eCT$k''$.
If the comparison is unsuccessful, then we reject $x$ and halt. Otherwise, we increment the value of $c$ by one.

(b) Initially, we set $d=0$ in CT4. In the aforementioned sequential way, we pick the elements $(p,h,\vec{s})$ from $CONF_{n,x}$ one by one.
For each selected element $(p,h,\vec{s})$, we need to store $(p,h,\vec{s})$ in other $k+1$ counters and copy their contents into the existing $k+1$ counters (as in (1)). For the simulation of $M_n$, to avoid introducing more counters, we reuse CT5$''$ and eCT1$''$--eCT$k''$. Follow nondeterministically a computation path, say, $\xi$ and check whether $(q_0,0,\vec{0})\vdash_x^{i-1} (p,h,\vec{s})$ is true on $\xi$.
As in (1), this is done by the use of the counters. If this is true, then we increment $d$ by one. We then check whether $(p,h,\vec{s})\vdash_x(q,l,\vec{m})$.  If so, reject $x$ and halt. After all elements $(p,h,\vec{s})$ are properly processed without halting, we check whether $d$ matches $\hat{N}_{i-1}$. If not, reject $x$ and halt. Note that the value $c$ does not change.

(4)
Assume that the above inductive procedure ends without entering rejecting states after the parameter $i$ reaches $r(n,|x|)$ and $\hat{N}_{r(n,|x|)}$ is determined. We remember it in a new counter, say, CT3$'$.
We sequentially pick all the elements $(z,t,\vec{e})$ from $(Q_n-Q_{acc,n})\times[0,|x|+1]_{\integer}\times [0,r(n,|x|)]_{\integer}^k$ one by one, store each picked one into CT5 and eCT1--eCT$k$, and conduct the same procedure as described above, except for any execution of (b) and the following point. At the end of the procedure, we obtain the value $c$ in CT3, we check whether this value $c$ equals $\hat{N}_{r(n,|x|)}$ stored in  CT3$'$.
If so, then we accept $x$; otherwise, we reject $x$.

It is possible to prove that, in a certain computation path of $P_n$, the value $\hat{N}_{i}$ correctly represents $N_i$ for any index  $i\in[0,r(n,|x|)]_{\integer}$.
The correctness of the value of $\hat{N}_i$ (i.e., $\hat{N}_{i}=N_{i}$) can be proven by induction on $i$ as in \cite{Imm88}. Therefore, $P_n$ correctly solves $(L_n^{(+)},L_n^{(-)})$.
The above procedure can be implemented on $P_n$ with the total of $5(k+1)+8$ counters. Thus, $\LL$ belongs to $\twonct_{5k+13}$.

This completes the proof of Lemma \ref{complement-upper-bound}.
\end{proofof}

\subsection{Reachability by Multi-Counter Pushdown Automata Families}

The technique of inductive counting was further elaborated by Borodin, Cook, Dymond, Ruzzo, and Tompa \cite{BCD+89} in order to prove that $\logcfl$ is closed under complementation.
Their proof argument, however, uses a Boolean-circuit simulation technique  on the semi-unbounded circuit model, which is well-known to exactly characterize languages in $\logcfl$ \cite{Ven91}.

In what follows, we intend to prove the following inclusion relationship.

\begin{theorem}\label{co-simulation}
For any $k\geq3$, $\co\twonpdct_k\subseteq \twonpdct_3$ holds, and thus $\twonpdct_3=\co\twonpdct_3$.
\end{theorem}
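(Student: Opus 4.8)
The plan is to mimic the inductive-counting argument of Lemma \ref{complement-upper-bound}, but to replace naive configuration reachability---which is no longer viable once an unbounded stack is present---by the \emph{realizable-pair} analysis that underlies the closure of $\logcfl$ under complementation \cite{BCD+89}. Fix any $\LL=\{(L_n^{(+)},L_n^{(-)})\}_{n\in\nat}$ in $\co\twonpdct_k$. Then $\co\LL\in\twonpdct_k$, so there is a nonuniform family $\{M_n\}_{n\in\nat}$ of polynomial-size $k$-counter 2npdcta's solving $\co\LL$ in polynomial time $r(n,|x|)$. My goal is to construct a family $\{P_n\}_{n\in\nat}$ of counter pushdown automata that uses a constant number of counters together with a single stack and that accepts $x$ exactly when $M_n$ rejects $x$; invoking Proposition \ref{reducing-counter} (so that $\twonpdct_{k'}=\twonpdct_3$ for $k'\geq3$) then finishes the argument, and taking complements of both sides yields $\twonpdct_3=\co\twonpdct_3$.

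The key is to work with \emph{surface configurations} $(q,l,\vec{m},a)$, recording the inner state $q$, the input-head location $l\in[0,|x|+1]_{\integer}$, the $k$ counter values $\vec{m}$ (each in $[0,r(n,|x|)]_{\integer}$, hence polynomially bounded), and the topmost stack symbol $a$. Since every component is polynomially bounded, the set of surface configurations has polynomial size and, as in Lemma \ref{complement-upper-bound}, can be enumerated sequentially with the help of a few auxiliary counters. Following \cite{BCD+89}, I call a pair $(A,B)$ of surface configurations \emph{realizable} if $M_n$ has a computation segment from $A$ to $B$ during which the stack never falls below its height at $A$ and returns to exactly that height at $B$. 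Realizable pairs admit the context-free-style recursion of \cite{BCD+89}: a pair is realizable through a single height-preserving move, through the concatenation of two shorter realizable pairs, or through sandwiching a realizable pair between a push and its matching pop. The crucial point is that a single realizable pair $(A,B)$ can be \emph{verified} by $P_n$ using only its own stack: $P_n$ treats the current stack top as a temporary bottom, nondeterministically simulates $M_n$ starting from $A$---pushing and popping on $P_n$'s stack exactly as $M_n$ would---and declares success if the simulation first returns the stack to that temporary bottom precisely in configuration $B$. Such a sub-simulation is balanced, so it restores the ambient stack and lasts at most $r(n,|x|)$ steps.

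With this verification primitive in hand, I carry out inductive counting over the generations of realizable pairs in the style of Lemma \ref{complement-upper-bound} and \cite{Imm88,Sze88}: for each generation I certify the \emph{exact} number $N_i$ of realizable pairs obtainable in at most $i$ derivation steps, storing $i$ and $N_i$ in dedicated counters and recomputing each pair from scratch when it is revisited, so that a negative answer (``pair not realizable'') may be trusted once the running count matches $N_{i-1}$. From the exact count of surface configurations reachable from the initial configuration $(q_0,0,\vec{0},\bot)$, $P_n$ decides whether any accepting surface configuration is reachable; $M_n$ rejects $x$ if and only if none is. Because the outer counting loop keeps $P_n$'s stack empty between verifications and each verification is balanced, the stack is shared cleanly between the counting bookkeeping (done entirely in counters) and the sub-simulations of $M_n$. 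All counters introduced for enumeration, for the stored counts, and for the copies needed during a sub-simulation are emptied and reused after each round, so only a constant number of counters is ever required; this places $\LL$ in $\twonpdct_{c}$ for some constant $c$, and Proposition \ref{reducing-counter} collapses this to $\twonpdct_3$.

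The hard part will be the stack discipline. Unlike the finite-counter setting of Lemma \ref{complement-upper-bound}, where re-deriving a configuration in $i$ steps is a plain counter-only simulation, here each re-derivation may recurse through nested push--pop sandwiches, and $P_n$ owns only one stack that it must share between the outer enumeration and every inner simulation of $M_n$. The reconciliation rests on two facts that must be checked carefully: that every verifying sub-simulation is \emph{balanced} and therefore leaves the ambient stack untouched, so nested and sequential verifications do not corrupt one another; and that the surface data and partial counts which must survive across a recursive push--pop boundary can be parked in counters and restored afterwards without increasing the counter count beyond a constant, using the empty-and-reuse convention already exploited in Lemma \ref{claim-reduction} and Lemma \ref{complement-upper-bound}. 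Verifying that the realizable-pair recursion of \cite{BCD+89} transfers faithfully to machines carrying polynomially bounded counters, and that the inductive count is preserved under that recursion, is the main technical burden.
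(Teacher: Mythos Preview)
Your route diverges from the paper's in a substantial way. The paper does \emph{not} use inductive counting in the pushdown case; it explicitly remarks that the circuit-based technique of \cite{BCD+89} ``does not seem to be directly applicable'' and instead follows an argument of Ruzzo~\cite{Ruz80}. Concretely, it rewrites $M_n$ as a machine $\tilde{M}_n$ whose stack stores \emph{conf-intervals} $(C,s,C',l,r)$ and which verifies acceptance by recursively splitting such intervals (Stage~[I]); it then complements $\tilde{M}_n$ by dualizing existential and universal branching (Stage~[II]); and finally it simulates the resulting alternating machine on a nondeterministic 2npdcta by enumerating every universal choice onto the stack while leaving the now-existential choice between the two halves of a split nondeterministic (Stage~[III]). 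No counts $N_i$ appear anywhere; Proposition~\ref{reducing-counter} is then invoked exactly as you do.

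Your inductive-counting plan is a reasonable alternative, but as written it contains a mismatch you should repair. You define the generations $N_i$ in terms of \emph{derivation steps} of the realizable-pair grammar, yet your verification primitive is \emph{direct simulation of $M_n$} on $P_n$'s stack. Direct simulation with a step bound certifies ``$(A,B)$ is realizable in at most $i$ steps of $M_n$,'' and that predicate has no clean one-level recursion to drive the Immerman--Szelepcs\'enyi loop: a balanced computation of length $i{+}1$ need not decompose into balanced pieces of length at most $i$ at the same stack level, so you cannot certify non-membership in $R_{i+1}$ from $|R_i|$ alone. What does recurse cleanly is derivation depth---$(A,B)$ is at level $i{+}1$ iff it splits, by concatenation or by a push/pop sandwich, into level-$i$ pairs---and positive verification of level-$i$ membership is then a depth-$i$ DFS over the derivation tree carried on $P_n$'s stack, not a simulation of $M_n$. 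Once you swap in that primitive, the rest of your outline (counter-based bookkeeping of $i$, $N_i$, and the enumeration indices; balanced sub-calls that restore the ambient stack; Proposition~\ref{reducing-counter} to collapse to three counters) goes through and yields a legitimate alternative to the paper's alternation-based construction.
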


Similarly to Theorem \ref{counter-four}, this theorem is an immediate consequence of the following key lemma together with Proposition \ref{reducing-counter}.

\begin{lemma}\label{pushdown-counter}
For any constant $k\in\nat^{+}$, $\co\twonpdct_k \subseteq \twonpdct_{10k+48}$.
\end{lemma}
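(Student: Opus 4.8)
The plan is to extend the inductive-counting argument of Lemma \ref{complement-upper-bound} from counter automata to counter pushdown automata, following the approach of Borodin--Cook--Dymond--Ruzzo--Tompa \cite{BCD+89} for $\logcfl$ but carried out directly on the machine model rather than on semi-unbounded circuits. The fundamental obstacle is that a 2npdcta has a stack in addition to its $k$ counters, so the ``reachability'' we must count is no longer reachability between surface configurations on a flat computation graph: a single push cannot be undone without a matching pop, and the natural objects to count are \emph{realizable surface pairs} $(c_1,c_2)$, meaning that starting from surface configuration $c_1$ with some stack symbol $A$ on top, the machine can reach surface configuration $c_2$ having popped exactly that occurrence of $A$ (net stack change zero, never dipping below). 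This is precisely the ``$\mathrm{NAuxPDA}$ realizable pair'' notion that underlies the $\logcfl=\co\logcfl$ proof. First I would set up the surface-configuration space $CONF_{n,x}=Q_n\times[0,|x|+1]_{\integer}\times\{1,\bot\}^k$ together with the stack top symbol, exactly as in Lemma \ref{complement-upper-bound}, and define the relevant counts $N_i$ to be the number of realizable pairs reachable within $i$ \emph{surface} steps, where a surface step either is an ordinary move or is a matched push/pop bracket collapsing an entire balanced sub-computation into one step.

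The key steps, in order, are as follows. First I would make the runtime polynomial and the push size bounded (both guaranteed for our families), so that a realizable computation decomposes into polynomially many surface steps and the counts $N_i$ stay bounded by $|CONF_{n,x}|^2$, a quantity a single counter can hold since $r(n,|x|)$ is polynomial. Second, I would recast the inductive-counting recursion: whereas in Lemma \ref{complement-upper-bound} a configuration at stage $i$ was certified by exhibiting a length-$(i-1)$ predecessor, here a realizable pair at stage $i$ is certified either by an ordinary transition from a realizable pair of smaller stage, or by a \emph{push/pop surround} of an inner realizable pair of smaller stage, so the verification subroutine must recursively re-simulate a balanced sub-computation using the actual stack. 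This is where the genuine pushdown power is used: the machine replays a guessed accepting/rejecting path, pushing and popping on its real stack to verify a claimed realizable pair, while the counters track the counts, the two tape-head positions, and the two $\vec{m}$-vectors being compared. Third, I would run the outer loop that computes $\hat N_i$ from $\hat N_{i-1}$ by sequentially enumerating all surface configurations (using the extra $k+1$ ``enumeration'' counters as in the finite-automaton case), nondeterministically asserting membership-or-not in $V_i$, and cross-checking the non-members against $\hat N_{i-1}$; finally I would count, among the terminal realizable pairs, those landing in non-accepting states and accept iff that count equals the total, exactly mirroring step (4) of Lemma \ref{complement-upper-bound}.

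The hard part, and the source of the large constant $10k+48$, will be the counter bookkeeping during the recursive stack-based verification. Unlike the flat simulation in Lemma \ref{complement-upper-bound}, verifying a realizable pair requires simultaneously (i) storing the two endpoint configurations $c_1,c_2$, each needing a head-position counter plus $k$ counters for its $\vec m$-vector; (ii) maintaining working copies of these for the live re-simulation, since decrementing-to-compare is destructive; (iii) the running counts $\hat N_i,\hat N_{i-1}$ and the loop parameters $c,d,i$; and (iv) a bypass counter for nondestructive transfers. Because each realizable pair carries two full surface configurations rather than one, the per-configuration counter budget roughly doubles relative to Lemma \ref{complement-upper-bound}, which is why the bound degrades from $5k+13$ to $10k+48$; the stack absorbs exactly the recursion depth of the balanced sub-computations, so no extra counter is spent on recursion itself. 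I would verify that all auxiliary counters are emptied and reused after each round, so that the total count is the claimed constant and independent of the polynomial runtime. The correctness proof is then an induction on $i$ establishing $\hat N_i=N_i$ on some computation path, identical in structure to the $\nl=\co\nl$ argument but with the realizable-pair recursion of \cite{BCD+89} replacing simple reachability. Once the count $\hat N_{r(n,|x|)}$ is certified and the non-accepting terminal realizable pairs are tallied, $P_n$ solves $(L_n^{(+)},L_n^{(-)})$, witnessing $\LL\in\twonpdct_{10k+48}$.
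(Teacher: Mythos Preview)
The paper does not use inductive counting here. After remarking that the circuit technique of \cite{BCD+89} does not seem to apply directly, it instead follows Ruzzo's tree-size argument: (I) it restructures $M_n$ into a machine $\tilde M_n$ whose stack holds \emph{conf-intervals} $(C,s,C',l,r)$ and whose only nondeterministic moves are choices of how to split the current interval, drawn from a polynomial-size index set $A^{(1)}_{n,x}$ or $A^{(2)}_{n,x}$; (II) it complements by swapping nondeterministic with universal moves and accepting with rejecting states, obtaining an alternating machine $\overline{M_n}$; (III) it simulates $\overline{M_n}$ nondeterministically by replacing each universal branch over the polynomial index set with an enumeration of all valid options pushed onto the stack, and each universal branch over a pair of sub-intervals with a nondeterministic choice of one of the two. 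No counts $N_i$ appear anywhere.

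Your inductive-counting plan has a real gap in the non-membership verification. The soundness of Immerman--Szelepcs\'{e}nyi's step (b) depends on the one-step successor predicate being deterministically checkable: once the count $d=\hat N_{i-1}$ forces every member of $V_{i-1}$ to have been correctly enumerated, any genuine one-step extension to the current target is \emph{always} detected, so a false ``$\notin V_i$'' assertion forces rejection. But in your setup a ``surface step'' that is a push/pop bracket hides an entire balanced sub-computation, so the predicate ``$c''\to c_2$ in one surface step'' is itself an existential, pushdown-type question. A computation path of your $P_n$ can therefore enumerate all of $V_{i-1}$ correctly (so the test $d=\hat N_{i-1}$ passes), guess the bracket witness wrongly, fail to detect a genuine extension, and exit step (b) with a target that actually lies in $V_i$, giving $\hat N_i<N_i$ without rejection. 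The same missed configuration can then be consistently under-verified in every later re-count, so nothing downstream forces rejection either, and the final equality test can accept even when $M_n$ has an accepting path. Separately, your two inductive rules (ordinary extension and push/pop surround) omit concatenation of two realizable sub-pairs, so the $V_i$ as described do not exhaust all realizable pairs. A sound adaptation would index stages by proof-tree height with concatenation and surround as AND-type constructors, refuted by nondeterministically choosing which child fails; but that is essentially the alternation structure the paper exploits directly in stage (III), not the flat Immerman--Szelepcs\'{e}nyi loop you describe.
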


Let us describe the proof of Lemma \ref{pushdown-counter}. To specify the number of necessary counters,
the Boolean-circuit simulation technique of \cite{BCD+89} does not seem to be directly applicable to the case of $k$-counter 2npdcta's.
Hence, we need to
seek out a direct simulation of a given complementary 2npdcta for the construction of the desired machine.
In part, we intend to use an argument of Ruzzo \cite[Theorem 1]{Ruz80}.


Now, we return to the proof of Lemma \ref{pushdown-counter}.
Consider a family $\MM=\{M_n\}_{n\in\nat}$ of polynomial-size $k$-counter 2npdcta's $M_n$ with $Q_n$ and $\Gamma_n$.
We assume without loss of generality that $M_n$ halts with the empty stack (containing only $\bot$) and its input-tape head at cell $0$.
Moreover, it is possible to modify $M_n$ so that $M_n$'s push size is exactly $1$ by expanding $\Gamma_n$ appropriately.
Assume also that, during the computation (except for the final step), the stack never becomes empty. This can be enforced by introducing a dummy bottom marker. Here, we call such 2npdcta's \emph{slim}.

Similarly to the case of 2nfcta's in Section \ref{sec:reachability}, given a $k$-counter 2npdcta $M_n$ with $Q_n$ and $\Gamma_n$, a \emph{(surface) configuration} $C$ of $M_n$ on input $x$ is of the form $(q,l,a,\vec{m})$, where $q\in Q_n$, $l\in[0,|x|+1]_{\integer}$, $a\in\Gamma_n$, and $\vec{m}=(m_1,m_2,\ldots,m_k)\in[0,t_x]_{\integer}^k$, where $t_x$ denotes the runtime of $M_n$ on input $x$.
In particular, the initial configuration $C_0$ and the final configuration $C_{fin}$ are of the form   $C_0=(q_0,0,\vec{0})$ and $C_{t_x}=(q_{acc,n},0,\vec{0})$, where $\vec{0}=(0,0,\ldots,0)$.
Let $CONF_{n,x}$ denote the set of all possible configurations of $M_n$ on $x$. Since $CONF_{n,x}=Q_n\times [0,|x|+1]_{\integer}\times \Gamma_n \times [0,t_x]_{\integer}^k$, it follows that $|CONF_{n,x}| \leq |Q_n|(|x|+1)|\Gamma_n|(t_x+1)^k = (n|x|t_x)^{O(1)}$.

\vs{-2}
\begin{proofof}{Lemma \ref{pushdown-counter}}
Let $\LL=\{(L_n^{(+)},L_n^{(-)})\}_{n\in\nat}$ denote any family of promise problems in $\co\twonpdct_k$. Its complement $\co\LL=\{(L_n^{(-)},L_n^{(+)})\}_{n\in\nat}$ thus belongs to $\twonpdct_k$. We then take a family $\{M_n\}_{n\in\nat}$ of polynomial-size $k$-counter 2npdcta's $M_n$ of the form $(Q_n,\Sigma,\{\rhd,\lhd\}, \Gamma_n, k,\{1,\bot\}, q_{0,n}, \bot, Q_{acc,n},Q_{rej,n})$ solving $\co\LL$.
For simplicity, we assume that each machine $M_n$ is slim and that $Q_{acc,n}=\{q_{acc}\}$ and $Q_{rej,n}=\{q_{rej}\}$.

Given an input $x$, $t_x$ denotes the runtime of $M_n$ on $x$.
We respectively denote by $C_0$ and $C_{fin}$ the initial configuration at time $0$ and the accepting configuration at time $t_x$.

Our goal is to obtain the intended $(10k+48)$-counter 2npdcta $N_n$ that simulates the complementary machine of $M_n$.
We first need to modify $M_n$ to another 2npdcta of ``clean form'' to make surface configurations useful in a later simulation of the complementary machine of $M_n$.
Let $x$ be any input.

To obtain such a form, it is useful to consider the special notion of \emph{configuration intervals} (or \emph{conf-intervals}, for short) $\eta$ of the form $(C,s,C',l,r)$ for $C,C'\in CONF_{n,x}$ and $l,r,s\in\nat$. This  quintuple indicates that, at round $r$, $C'$ is reachable from $C$ and they have \emph{distance} $l$ (namely, $C\vdash_x^{l} C'$) and there are $l_1,l_2,\ldots,l_s$ and $C_1,C_2,\ldots,C_s$ whose stack heights are $r$ such that $C\vdash_x C_1$, $C_i\vdash_x C_{i+1}$ for any $i\in[s-1]$, and $C_s\vdash_x C'$.
The checking of a single transition $C\vdash_x C'$ can be done by recovering $C$ from the stack, including an inner state, $k$ counter contents, and a topmost stack symbol. Before recovering $C$, in practice, we need to place a distinguished separator to mark a new bottom marker and then apply a transition function to generate $C''$ from $C$ and compare between $C''$ and the stored $C'$ in the stack. The last entry of $\eta$ is briefly called an ``$r$''-value of $\eta$.

We further introduce the notations concerning conf-intervals. We denote by  $A^{(1)}_{n,x}$ the set of all possible conf-intervals  $\eta$ induced by $M_n$ on $x$; that is, $A^{(1)}_{n,x} = CONF_{n,x}\times [0,t_x-1]_{\integer} \times CONF_{n,x}\times [0,t_x-1]_{\integer}^2$.
A conf-interval $(C,s,C',l,r)$ is succinctly called \emph{basic} if  $s=0$. We also define $A^{(2)}_{n,x}$ to be $A^{(1)}_{n,x}\times A^{(1)}_{n,x}$.

By fixing an appropriate linear ordering on $A^{(1)}_{n,x}$, it is possible to design a procedure, say, $\AAA_{n,x}$ that generates all elements of $A^{(1)}_{n,x}$ sequentially according to this linear order. This procedure $\AAA_{n,x}$ helps us find the $i$th element of $A^{(1)}_{n,x}$ using two extra counters when $i$ is provided in another counter. We note that $\AAA_{n,x}$ can be implemented on an appropriate $7$-counter 2nfcta.

To handle conf-intervals in the following modification process, we view each conf-interval as an  $O(\log{n|x|t_x})$-length string and allow the stack to hold such a string as a ``single'' entry of the stack although it actually occupies a block of $O(\log{n|x|t_x})$ consecutive stack cells.
Moreover, by running $\AAA_{n,x}$ twice, we can find a pair of the $i$th element and the $j$th element of $A^{(1)}_{n,x}$ when $i,j$ are provided in two counters.


\begin{figure}[t]
\centering
\includegraphics*[height=3.5cm]{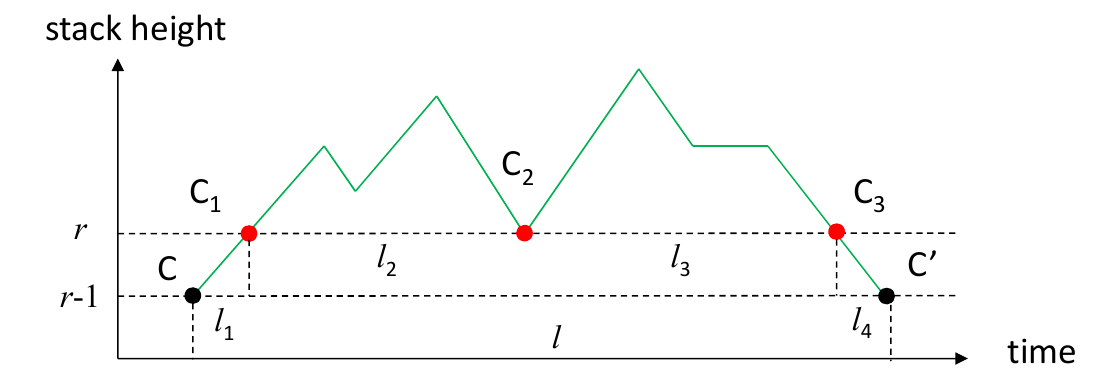}
\caption{A stack history of a 2npdcta. Rounds $r-1$ and $r$ loosely correspond to stack height. From a conf-interval $(C,0,C',l,r-1)$, we obtain another conf-interval $(C_1,1,C_3,l_2+l_3,r)$ with $l=l_1+l_2+l_3+l_4$ and $l_1=l_4=1$, from which we further obtain two more conf-intervals $(C_1,0,C_2,l_2,r)$ and $(C_2,0,C_3,l_3,r)$ by applying (i) and (ii) in the nondeterministic phase (2) in the proof of Lemma \ref{pushdown-counter}.
}\label{fig:stack-height}
\end{figure}


\ms

[I]
In this first stage, we intend to modify $M_n$ into another $(5k+8)$-counter 2npdcta, say, $\tilde{M}_n$ so that the obtained family $\{\tilde{M}_n\}_{n\in\nat}$ still solves $\co\LL$.
The modification of $M_n$ into $\tilde{M}_n$ is carried out by taking the following recursive steps (1)--(4).

(1) [initial phase] In this first phase, we generate the conf-interval  $\eta_0= (C_0,0,C_{fin},t_x,0)$ and store it in $\tilde{M}_n$'s stack. We then move to the phase (3).

(2) [accepting phase]
If the stack is empty, then we accept the input and terminate the procedure. Otherwise, we must move to the phase (3).
Note that basic conf-intervals $(\eta_1,\eta_2,\ldots,\eta_d)$ having the same $r$-value for $d\in\nat^{+}$ are stored consecutively from the top in the stack.
In the later construction of $N_n$, this phase will be converted into a specific nondeterministic phase.

(3) [nondeterministic phase]
Assume that the topmost stack entry $\eta$ is a conf-interval of the form $(C,s,C',l,r-1)$. We pop it and copy its elements into $2k+3$ separate counters.
See Fig.~\ref{fig:stack-height} for an illustration of several conf-intervals obtainable in the cases below.

(i) In the case of $l\geq2$ and $s=0$, we nondeterministically generate two configurations $C_{1},C_{2}\in CONF_{n,x}$ and two numbers  $l_1\in[0,t_x]_{\integer}$ and $s_1\in[0,t_x-1]_{\integer}$ to form one conf-interval $\eta_1= (C_1,s_1,C_2,l_1,r)$. We then check whether the condition
$CND^{(1)}_{n,x,\eta}(\eta_1)$ is true, where $CND^{(1)}_{n,x,\eta}(\eta_1)$ is true if $C\vdash C_1$, $C_2\vdash C'$, $l_1=l-2$ hold, and   $CND^{(1)}_{n,x,\eta}(\eta_1)$ is false otherwise. If not, we immediately reject the input and halt. Otherwise, we store $\eta_1$ in the stack and then move back to the phase (3).

(ii) In the case of $l\geq2$ and $s\geq1$, similarly to (i), we nondeterministically generate three configurations $C_{1},C_{2},C_{3} \in CONF_{n,x}$ and four numbers $l_1,l_2\in[0,t_x]_{\integer}$ and $s_1,s_2\in[0,t_x-1]_{\integer}$ to form two conf-intervals  $\eta_1= (C_1,s_1,C_2,l_1,r-1)$ and $\eta_2= (C_2,s_2,C_3,l_2,r-1)$.
These generated quintuples are stored in $3k+4$ separate counters.
We then check whether $CND^{(2)}_{n,x,\eta}(\eta_1,\eta_2)$ is true using an extra counter, where $CND^{(2)}_{n,x,\eta}(\eta_1,\eta_2)$ is true if $C_1=C$, $C_3=C'$, $s_1=0$, $s_2=s-1$, and $l=l_1+l_2$ hold, and    $CND^{(2)}_{n,x,\eta}(\eta_1,\eta_2)$ is false otherwise.
If not, we reject and halt. Otherwise, we store $\eta_1$ and $\eta_2$ in this order into the stack and then move back to the phase (3).

(iii) In the case of $l=0$ and $s=0$, we must go to the phase (4).

(4) [deletion phase]
Assume that $\eta= (C,s,C',l,r)$ is in the topmost stack cell.
We pop it and store its elements into $2k+3$ counters. In the case of $l=s=0$, if $C\neq C'$, then we reject and halt; otherwise, we delete both $\eta$ and the next stored conf-interval of the form $(\tilde{C},s',\tilde{C}',2,r)$ for certain $\tilde{C}$ and $\tilde{C}'$. We then move to the phase (2).

\ms

[II]
In this second stage, we consider the ``complementary'' machine, say, $\overline{M_n}$ obtained from $\tilde{M}_n$ by the following modification. We swap between nondeterministic moves and universal moves and also between $Q_{acc,n}$ and $Q_{rej,n}$. It thus follows that all computation paths of $\overline{M_n}$ are ``accepting'' for each input $x\in L_n^{(+)}$ and that there exists a ``rejecting'' computation path of $\overline{M_n}$ for each  $x\in L_n^{(-)}$. This clearly implies that the obtained family $\{\overline{M_n}\}_{n\in\nat}$ can solve $\LL$.

\ms

[III]
Recall that $\tilde{M}_n$ uses $(5k+8)$-counters. For convenience, we set $k'=5k+8$. Now, we construct the desired $(5k'+8)$-counter 2npdcta's $N_n$ that simulates the complementary  machine $\overline{M_n}$.

(1$'$) This initial phase is in essence similar to (1), including the same accepting configuration $C_{fin}$ of $\tilde{M}_n$. We then move to the phase (3$'$).

(2$'$) Next, we modify the phase (2) of $\tilde{M}_n$. If the stack is empty, then we accept the input and halt.
Assuming otherwise, we move to the phase (3$'$).

(3$'$) This is a key phase of $N_n$.
Let $\eta=(C,s,C',l,r-1)$ denote a topmost entry of $N_n$'s stack.
We pop it and keep it using $(2k'+3)$-counters.
Since $\tilde{M}_n$ makes nondeterministic choices of elements in $A^{(1)}_{n,x}$ in (i),  $\overline{M_n}$ makes universal choices of
conf-intervals in the phase corresponding to (i).
However, since $N_n$ is not allowed to make universal moves, we instead make the following nondeterministic moves (i$'$)--(iii$'$).

(i$'$) Assume that $l\geq2$ and $s=0$.
We wish to turn $\overline{M_n}$'s universal moves to stack operations of  sequentially storing in the stack $\overline{M_n}$'s nondeterministically chosen conf-intervals. For this purpose, we run $\AAA_{n,x}$ to sequentially generate all elements $\eta_1$ in $A^{(1)}_{n,x}$ and check whether $CND^{(1)}_{n,x,\eta}(\eta_1)$ is true. If not, we skip $\eta_1$ or else we push $\eta_1$ into the stack.
Since $|A^{(1)}_{n,x,\eta}|=(n|x|t_x)^{O(1)}$, it is possible to store all elements $\eta_1$ of $A^{(1)}_{n,x,\eta}$ satisfying $CND^{(1)}_{n,x,\eta}(\eta_1)$.
This entire process can be implemented on a 2npdcta by incrementing a counter holding $i$ of $\AAA_{n,x}$.
We thus need $(3k'+4)$-counters to execute this whole process.
After all elements of $A^{(1)}_{n,x,\eta}$ are properly processed, we move back to the phase (3$'$).

(ii$'$) Assume that $l\geq2$ and $s\geq1$. Notice that all possible nondeterministic choices of $\tilde{M}_n$ come from the set $A^{(2)}_{n,x}$. Similarly to (i$'$), we sequentially generate all pairs $(i,j)$ with $i,j\in[0,|A^{(1)}_{n,x,\eta}|]_{\integer}$ and $i<j$. For each chosen pair $(i,j)$, we produce the $i$th element $\eta_1$ and the $j$th element $\eta_2$ by running $\AAA_{n,x}$ twice. Since we do not need to check both $\eta_1$ and $\eta_2$, we nondeterministically choose either $\eta_1$ or $\eta_2$ and store only the chosen one in the stack. After all pairs are processed, we move back to the phase (3$'$).

(iii$'$) If $l=0$ and $s=0$, then we must go to the phase (4$'$).

(4$'$) Given a topmost entry $\eta= (C,s,C',l,r)$ in the stack, in the case of $l=0$ and $s=0$, we reject and halt if $C=C'$; otherwise, we delete $\eta$ and the next stack entry of the form $(\tilde{C},\tilde{s},\tilde{C}',2,r)$. We then move to the phase (3$'$).

Overall, $N_n$ requires $(5k'+8)$-counters, as requested.
\end{proofof}

\section{Relationships to Parameterized Decision Problems}\label{sec:parameterization}

Kapoutsis \cite{Kap09,Kap12} introduced the succinct notation of ``$/\poly$'' for $\twod$, $\twon$, etc. to express the polynomial-ceiling restriction of all valid inputs. He then demonstrated the following equivalence: $\twon/\poly \subseteq \twod$ iff $\nl\subseteq \dl/\poly$ \cite{Kap14}. This equivalence establishes a bridge to the standard complexity classes $\dl$ and $\nl$.
A key argument of his proof is pivoted around an $\nl$-complete problem called ``two-way liveness'' and its variant  \cite{Kap14,SS78} together with the use of ``homomorphic reductions.''
Yamakami \cite{Yam22a} later presented another proof technique, which is not relying on any particular ``complete'' problems or any particular ``reductions''.
This ``generic'' proof technique exploits a close connection between parameterized decision problems and families of promise problems solvable by certain finite and pushdown automata families. This connection was already used to establish bridges between the collapses of nonuniform {(stack-)state} complexity classes and those of standard advised complexity classes \cite{Yam21,Yam22a,Yam22b}.

In what follows, we quickly review necessary terminology, introduced in \cite{Yam23a} and applied in \cite{Yam21,Yam22a,Yam22b} to nonuniform {(stack-)state} complexity issues.
A \emph{parameterized decision problem} over alphabet $\Sigma$ is  of the form $(L,m)$, where $L\subseteq \Sigma^*$ and $m(\cdot)$ is a size parameter (i.e., a mapping of $\Sigma^*$ to $\nat$).
Any size parameter computable by an appropriate log-space deterministic Turing machine (DTM) is called a \emph{logspace size parameter}. A typical example is $m_{\parallel}$ defined as $m_{\parallel}(x)=|x|$ for all $x\in\Sigma^*$.
All parameterized decision problems whose size parameters $m$ is \emph{polynomially honest} (i.e., $|x|\leq q(m(x))$ for all $x\in\Sigma^*$ for an absolute polynomial $q$) form the complexity class $\phsp$.
Refer to \cite{Yam22a,Yam23a} for more information.

Let us recall that $\logcfl$ is characterized in terms of polynomial-time log-space aux-2npda's \cite{Sud78}.
We use the notation $\para\mathrm{NAuxPDA,\!\!TISP}(n^{O(1)},\log{n})/\poly$ to denote the collection of all parameterized decision problems $(L,m)$ with logspace size parameters $m$ solvable by advised aux-2npda's running in time $m(x)^{O(1)}$ and space $O(\log{m(x)})$ with the use of advice strings of length polynomial in $m(x)$, where $x$ represents an ``arbitrary'' input.
As a special case, if $m$ is fixed to $m_{\parallel}$, since $(L,m_{\|})$ can be viewed as just $L$,
we obtain the non-parameterized class $\logcfl/\poly$. In a similar fashion, we can define $\para\mathrm{UAuxPDA,\!\!TISP}(n^{O(1)},\log{n})/\poly$ and $\para\co\mathrm{NAuxPDA,\!\!TISP}(n^{O(1)},\log{n})/\poly$ using underlying advised aux-2upda's and ``complementary'' advised aux-2npda's, respectively. 
Following Section \ref{sec:advised-machine}, we also abbreviate the above-mentioned parameterized advised complexity classes as $\para\logcfl/\poly$, $\para\logucfl/\poly$, and $\para\co\logcfl/\poly$.

Hereafter, we denote by $(P,m)$ any parameterized decision problem and by  $\KK=\{(K_n^{(+)},K_n^{(-)})\}_{n\in\nat}$ and $\PP=\{(P_n^{(+)},P_n^{(-)})\}_{n\in\nat}$ two arbitrary families of promise problems over a common alphabet $\Sigma$, which does not include a special symbol $\#$ used as a separator. For convenience, we set $\Sigma_{\#}=\Sigma\cup\{\#\}$ and $S_{\#,n}^{(\geq2)} = \{w\in (\Sigma_{\#})^{n}\mid \text{ $w$ contains more than one $\#$ }\}$ for each number $n\in\nat$.
The \emph{size-incorporated family} $\KK$ of $\PP$ is defined by setting $K_n^{(+)} = \{1^n\# x\mid x\in P_n^{(+)}\}$ and $K_n^{(-)}=\{1^n\# x\mid x\notin P_n^{(+)}\} \cup S_n$ for any $n\in\nat$,  where $S_n$ denotes the set $\{z\# x\mid z\in \Sigma^n-\{1^n\},x\in \Sigma^*\}\cup \Sigma^n \cup S_{\#,n}^{(\geq 2)}$.

\begin{definition}\label{def-induce}
(1) Given a parameterized decision problem $(L,m)$ over $\Sigma$, we say that $\PP$ is \emph{induced from} $(L,m)$ if, for any index $n\in\nat$, $P_n^{(+)}= L\cap \Sigma_{(n)}$ and $P_n^{(-)}=\overline{L}\cap \Sigma_{(n)}$, where $\overline{L}=\Sigma^*-L$ and $\Sigma_{(n)}=\{x\in\Sigma^*\mid m(x)=n\}$.

(2) Given a family $\PP$ of promise problems, we say that $(K,m)$ with $K\subseteq (\Sigma_{\#})^*$ and $m:(\Sigma_{\#})^*\to\nat$ is \emph{induced from} $\PP$ if there exists a size-incorporated family  $\KK=\{(K_n^{(+)},K_n^{(-)})\}_{n\in\nat}$ of $\PP$ for which
(i) $K=\bigcup_{n\in\nat} K_n^{(+)}$, (ii) $m(w)=n$ holds for any string $w \in \{1^n\# x \mid x\in P_n^{(+)}\cup P_n^{(-)}\}$, and (iii)  $m(w)=|w|$ holds for all other strings $w$.
\end{definition}

In the above definition, the set $\bigcup_{n\in\nat} K_n^{(-)}$ equals $(\bigcup_{n\in\nat}\{1^n\# x\mid x\notin P_n^{(+)}\})\cup (\bigcup_{n\in\nat}S_n) = \{1^n\# x\mid x\notin P_n^{(+)}\}\cup \Sigma^* \cup \{w\mid \text{ $w$ contains more than one $\#$ }\}$. Thus,
$\overline{K}$ coincides with $\bigcup_{n\in\nat} K_n^{(-)}$.
See \cite{Yam22a} for more information.

A family $\KK$ is said to be an \emph{extension} of $\PP$ if  $P_n^{(+)}\subseteq K_n^{(+)}$ and $P_n^{(-)}\subseteq K_n^{(-)}$ hold for any $n\in\nat$.
As a quick example, if a 2nfa $M_n$ solves $(P_n^{(+)},P_n^{(-)})$, then the pair $(L(M_n),L(\overline{M_n}))$ is an extension of $(P_n^{(+)},P_n^{(-)})$ because $M_n$ accepts all inputs in $P_n^{(+)}$ and rejects all inputs in $P_n^{(-)}$, where $\overline{M_n}$ expresses a ``complementary'' machine\footnote{A complementary machine is a machine whose final decision is the opposite of the original machine.} of $M_n$.
Moreover, we say that the family $\PP$ is \emph{L-good} if the set $\{1^n\# x \in (\Sigma_{\#})^* \mid n\in\nat, x\in P_n^{(+)}\cup P_n^{(-)}\}$ belongs to $\dl$ (log-space complexity class).
An \emph{L-good extension} of $\PP$ means an $\dl$-good family that is also an extension of $\PP$. We further expand the notion of $\dl$-goodness to a complexity class.
A collection $\FF$ of promise problem families is \emph{L-good} if, for each family $\PP$ of $\FF$, there is an $\dl$-good extension of $\PP$ in $\FF$.

The $\dl$-goodness can be proven for, e.g., $\twodpd$ and $\twonpd$ in a way similar to \cite{Yam22a}.
The $\dl$-goodness of $\co\twonpd$ also follows.

\begin{lemma}
The following complexity classes are all $\dl$-good: $\twod$, $\twon$, $\twodpd$, $\twonpd$, and $\co\twonpd$. Even if the runtime restriction is imposed on underlying machines, they remain $\dl$-good.
\end{lemma}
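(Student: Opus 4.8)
The plan is to handle all five classes with a single device: given any family $\PP=\{(P_n^{(+)},P_n^{(-)})\}_{n\in\nat}$ in the class, replace it by the \emph{total} extension read off directly from the machines that solve $\PP$, so that every valid-input set becomes all of $\Sigma^{*}$. Once the valid-input set is $\Sigma^{*}$ at every index, the $\dl$-goodness requirement collapses to recognizing the purely syntactic set $\{1^n\#x\mid n\in\nat,\ x\in\Sigma^{*}\}$, which is regular and hence in $\dl$; the only genuine work is to verify that this total extension still lies in the same class as $\PP$.

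First I would treat the four ``direct'' classes $\twod,\twon,\twodpd,\twonpd$ uniformly. Let $\MM=\{M_n\}_{n\in\nat}$ be the polynomial-size family (of the appropriate machine type) solving $\PP$, and set $K_n^{(+)}=L(M_n)$ and $K_n^{(-)}=\Sigma^{*}-L(M_n)$, where $L(M_n)$ is the set of inputs on which $M_n$ has an accepting computation path. These are disjoint and total, so $\KK=\{(K_n^{(+)},K_n^{(-)})\}_{n\in\nat}$ is a promise-problem family, and it extends $\PP$: every $x\in P_n^{(+)}$ is accepted, hence lies in $L(M_n)=K_n^{(+)}$; and every $x\in P_n^{(-)}$ has no accepting path of $M_n$ (such a path would be a halting accepting path, contradicting that $M_n$ rejects $x$), hence lies in $\Sigma^{*}-L(M_n)=K_n^{(-)}$. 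Crucially, the \emph{same} machine $M_n$ solves $(K_n^{(+)},K_n^{(-)})$: for $x\in K_n^{(+)}$ an accepting path exists by definition, and for $x\in K_n^{(-)}$ there is no accepting path, so every \emph{halting} path is rejecting, which is exactly what solving requires. Thus $\KK$ stays in the class of $\PP$, and no halting assumption on $M_n$ is needed.

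Next I would dispose of $\co\twonpd$ by passing through the complement. If $\LL=\{(L_n^{(+)},L_n^{(-)})\}_{n\in\nat}\in\co\twonpd$, then $\co\LL\in\twonpd$ is solved by some polynomial-size family $\MM=\{M_n\}_{n\in\nat}$ of 2npda's. By the previous step, $(L(M_n),\Sigma^{*}-L(M_n))$ is a total extension of $\co\LL$ lying in $\twonpd$, so its complement $(\Sigma^{*}-L(M_n),L(M_n))$ lies in $\co\twonpd$. Since $M_n$ accepts all of $L_n^{(-)}$ and rejects all of $L_n^{(+)}$, we have $L_n^{(-)}\subseteq L(M_n)$ and $L_n^{(+)}\subseteq\Sigma^{*}-L(M_n)$, so this complement is an extension of $\LL$, and its valid-input set is again $\Sigma^{*}$.

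Finally, for each class the constructed extension has valid-input set $\Sigma^{*}$ at every index, so the associated $\dl$-good set is $\{1^n\#x\mid n\in\nat,\ x\in\Sigma^{*}\}$. Because $\#\notin\Sigma$, a string belongs to this set iff it is a (possibly empty) block of $1$'s, a single $\#$, and a $\#$-free suffix; this is a regular condition, hence decidable in $\dl$. Since the witnessing family is the original $\MM$ (or its complementary reading), any runtime bound satisfied by $\MM$ is inherited verbatim, which yields the final clause about imposing the runtime restriction. I expect the main obstacle to be not computational but definitional bookkeeping: getting the nondeterministic acceptance semantics exactly right — in particular exploiting that ``all halting paths reject'' holds automatically when there is no accepting path, so that possibly non-halting 2npda's cause no trouble — and correctly tracking which side of each pair is which through the complementation step for $\co\twonpd$.
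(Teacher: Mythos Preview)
Your proposal is correct and follows essentially the same approach as the paper: build a total extension $\KK$ from the machines $M_n$ themselves so that $K_n^{(+)}\cup K_n^{(-)}=\Sigma^{*}$, whence the $\dl$-goodness condition reduces to the regular language $\{1^n\#x\mid n\in\nat,\ x\in\Sigma^{*}\}$, and handle $\co\twonpd$ by passing through the complement. The only cosmetic difference is that the paper first modifies $M_n$ to halt on all inputs before setting $K_n^{(-)}=\{x\mid M_n\text{ rejects }x\}$, whereas you observe that with $K_n^{(-)}=\Sigma^{*}-L(M_n)$ the ``all halting paths reject'' clause is satisfied vacuously on non-halting inputs, so no such modification is needed; this is a harmless simplification and yields the same extension.
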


\begin{proof}
We first show the $\dl$-goodness of $\twonpd$.
Let us take an arbitrary family $\LL=\{(L_n^{(+)},L_n^{(-)})\}_{n\in\nat}$ of promise problems over alphabet $\Sigma$ in $\twonpd$.
There exists a family $\MM=\{M_n\}_{n\in\nat}$ of polynomial-size 2npda's that solves $\LL$.
Let $L_{(n)}= L_n^{(+)}\cup L_n^{(-)}$ for any $n\in\nat$.
Although $M_n$ is not even required to halt on inputs outside of $L_{(n)}$, we can easily modify $M_n$ to halt on all of its computation paths on any input.
We then define $K_n^{(+)} = \{x\in\Sigma^*\mid \text{ $M_n$ accepts $x$ }\}$ and $K_n^{(-)} = \{x\in\Sigma^*\mid \text{ $M_n$ rejects $x$ }\}$. We then set $\KK =\{(K_n^{(+)},K_n^{(-)})\}_{n\in\nat}$.
Obviously, $\KK$ belongs to $\twonpd$.

Here, we claim that $\KK$ is $\dl$-good.
Since $K_n^{(+)}\cup K_n^{(-)} = \Sigma^*$, $L_n^{(+)}\subseteq K_n^{(+)}$, and  $L_n^{(-)}\subseteq K_n^{(-)}$, it then follows that $\KK$ is an extension of $\LL$. We then set $A=\{1^n\# x\mid n\in\nat, x\in K_n^{(+)}\cup K_n^{(-)}\}$, which equals $\{1^n\# x\mid n\in\nat, x\in\Sigma^*\}$. Clearly, this set $A$ is recognized deterministically using only logarithmic space, and thus $A$ belongs to $\dl$.

In a similar way, we can prove the $\dl$-goodness of $\twon$. Furthermore, the $\dl$-goodness of $\co\twonpd$ comes from the fact that the set $\{1^n\#x\mid x\in \nat, x\in P_n^{(+)}\cup P_n^{(-)}\}$ does not alter by exchanging between $P_n^{(+)}$ and $P_n^{(-)}$.
\end{proof}


Here, we introduce the important notion of canonical twins.

\begin{definition}\label{def-twin}
A pair $(\CC,\DD)$ of complexity classes is said to be \emph{canonical twins} if the following three statements (a)--(c) are all satisfied. If only the two statements (a)--(b) are satisfied, then $(\CC,\DD)$ is called \emph{pseudo-canonical twins}.
Let $L$ and $K$ be any two decision problems over the common alphabet $\Sigma$ and let $m$ denote any logspace size parameter over $\Sigma$. Let $\LL$ be any family of promise problems.
\renewcommand{\labelitemi}{$\circ$}
\begin{enumerate}\vs{-1}
  \setlength{\topsep}{-2mm}%
  \setlength{\itemsep}{1mm}%
  \setlength{\parskip}{0cm}%

\item[(a)] If $\LL$ is induced from $(L,m)$, then $(L,m)\in\para\CC/\poly\cap \PHSP$ iff $\LL\in\DD/\poly$.

\item[(b)]  If $(K,m)$ is induced from $\LL$, then $(K,m)\in\para\CC/\poly\cap \PHSP$ implies $\LL\in\DD/\poly$.

\item[(c)]  If $(K,m)$ is induced from $\LL$, then $\LL\in\DD/\poly$ implies $(K,m)\in\para\CC/\poly\cap \PHSP$.
\end{enumerate}\vs{-2}
\end{definition}

A key statement of this work is given as the following proposition. We write $\LGOOD$ to denote the collection of all $\dl$-good families of promise problems.

\begin{proposition}\label{L-good-induced}
(1) Consider the set $\{(\logdcfl,\twodpd), (\logcfl,\twonpd)\}$. 
Every element $(\CC,\DD)$ of this set is canonical twins.

(2) Consider the set $\{(\ul,\twou), (\logucfl,\twoupd)\}$. Every element $(\CC,\DD)$ in this set is pseudo-canonical twins.

(3) Consider the set $\{(\ul,\twou\cap\LGOOD),(\logucfl,\twoupd\cap\LGOOD)\}$. Every $(\CC,\DD)$ is canonical twins.
\end{proposition}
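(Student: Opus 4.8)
The plan is to treat all three parts uniformly through a single machine-translation dictionary and then verify the three defining properties (a)--(c) of Definition \ref{def-twin} in each case, isolating exactly where the unambiguous setting forces the extra \LGOOD\ hypothesis. The backbone is the standard correspondence between log-space auxiliary pushdown computation with advice and nonuniform polynomial-size pushdown automata families, already exploited in \cite{Yam21,Yam22a,Yam22b,Yam23b}: an advice string of length polynomial in the size parameter becomes the nonuniform description of the $n$th automaton; the $O(\log m(x))$ work tape contributes $2^{O(\log m(x))}=\poly(m(x))$ configurations, which are absorbed into the (stack-)state complexity; and the auxiliary stack is matched by the stack of the two-way pushdown automaton. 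Membership in \PHSP, i.e.\ polynomial honesty of $m$, is precisely what guarantees that inputs are polynomially bounded in the index, so that the resulting families carry a polynomial ceiling and the suffix $/\poly$ is respected in both directions. Crucially, the deterministic, nondeterministic, and unambiguous flavours of the machines are preserved verbatim by this translation.

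First I would establish property (a) for every listed pair. Here the induced family has $n$th member $(L\cap\Sigma_{(n)},\overline{L}\cap\Sigma_{(n)})$ with $\Sigma_{(n)}=\{x\mid m(x)=n\}$. For the forward direction I would take an advised auxiliary pushdown automaton solving $(L,m)$ and, for each $n$, hard-wire the polynomially long advice for index $n$ together with the $\poly(n)$ work-tape configurations to obtain the $n$th automaton of a $\DD/\poly$ family; for the converse I would collect the family into a single advised machine that, on input $x$, uses $m(x)$ to select the correct member and its advice. Property (b) is the same translation applied to the size-incorporated family: from an advised machine for $(K,m)$ I would fix the prefix $1^n\#$ and the advice slice indexed by $n$ to manufacture an automaton solving $(P_n^{(+)},P_n^{(-)})$, yielding $\LL\in\DD/\poly$. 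Both (a) and (b) go through identically for the deterministic, nondeterministic, and unambiguous cases, since neither direction manufactures spurious accepting paths.

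The decisive step is property (c), the converse for the size-incorporated family, and this is where the three parts diverge. Given a $\DD/\poly$ family solving $\LL$, I would build an advised auxiliary pushdown automaton for $(K,m)$ that reads $1^n\#x$, decodes $n$ (consulting the advice slice indexed by $m(w)$, recalling that $m(w)=n$ on well-formed inputs and $m(w)=|w|$ on the garbage set $S_n$), and simulates the $n$th member on $x$. For the deterministic and nondeterministic pairs in part (1) this is unproblematic: malformed inputs and inputs in $\overline{K}$ may simply be rejected, so property (c) holds and the pairs are canonical twins. For the unambiguous pairs $(\ul,\twou)$ and $(\logucfl,\twoupd)$ in part (2), however, the simulating machine must remain \emph{unambiguous on the whole of} $(\Sigma_{\#})^*$, including the garbage region $S_n$ and the inputs $1^n\#x$ with $x\in P_n^{(-)}$; without a way to certify validity deterministically, the natural simulation cannot guarantee a single accepting path on these extra inputs, so only (a)--(b) survive and the pairs are merely pseudo-canonical twins. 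Part (3) repairs exactly this defect: \LGOOD\ supplies a deterministic log-space recognizer for $\{1^n\#x\mid x\in P_n^{(+)}\cup P_n^{(-)}\}$, which I would run as a deterministic log-space filter preceding the unambiguous simulation, so that the machine commits to the unambiguous phase only on genuinely valid instances and rejects everything else along a single deterministic path; composing a deterministic filter with an unambiguous simulation preserves unambiguity, restoring property (c) and hence canonical-twin status.

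The hard part will be exactly this unambiguity-preservation argument for part (3): I must check that the \LGOOD\ filter runs within $O(\log m(w))$ space and $m(w)^{O(1)}$ time under the two-regime size parameter, that decoding $n$ from the $1^n\#$ prefix and retrieving the correct advice slice stay inside the log-space budget, and --- most delicately --- that interleaving the deterministic filter with the unambiguous simulation introduces no second accepting computation path on any input, valid or not, so that the resulting machine genuinely witnesses membership in $\para\ul/\poly$ (resp.\ $\para\logucfl/\poly$), while \PHSP\ continues to hold because $m$ remains polynomially honest on both regimes.
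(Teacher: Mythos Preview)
Your overall architecture matches the paper's: the same machine-translation dictionary, the same verification of (a), (b), (c), and the same identification of (c) as the place where \LGOOD\ enters. But your treatment of (c) in part (1) has a genuine gap, and your diagnosis of the obstacle in parts (2)--(3) is slightly miscalibrated.

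For (1c) you write that ``malformed inputs and inputs in $\overline{K}$ may simply be rejected.'' This is not justified. Recall that $\overline{K}$ contains every string $1^n\#x$ with $x\notin L_n^{(+)}$, and in particular every $1^n\#x$ with $x\notin L_n^{(+)}\cup L_n^{(-)}$. On such non-promised $x$ the automaton $M_n$ is under no obligation whatsoever: it may well accept. Your simulated machine would then accept $1^n\#x\in\overline{K}$, giving a wrong answer --- this is a \emph{correctness} failure, not merely an ambiguity issue, and it arises already in the deterministic and nondeterministic settings. The paper handles this by first invoking the preceding lemma that $\twodpd$ and $\twonpd$ are $\dl$-good, then inserting the deterministic log-space filter (*) that checks $x\in L_n^{(+)}\cup L_n^{(-)}$ before simulating $M_n$. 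You need the same step; without it (1c) does not go through.

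Relatedly, your account of why (c) fails for $\twou$ and $\twoupd$ misplaces the trouble. You point to inputs $1^n\#x$ with $x\in P_n^{(-)}$, but those are \emph{promised} instances on which $M_n$ is already unambiguous (and rejecting). The problematic inputs are again the non-promised ones, $1^n\#x$ with $x\notin P_n^{(+)}\cup P_n^{(-)}$: there $M_n$ may accept, and may do so along several paths. So the obstacle is the same correctness-plus-ambiguity problem as above; what distinguishes part (2) from part (1) is only that no lemma guarantees $\dl$-goodness of $\twou$ or $\twoupd$, so the filter cannot be supplied automatically --- hence the explicit $\LGOOD$ hypothesis in part (3). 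Once you rephrase the obstacle this way, your fix via the deterministic log-space pre-filter is exactly what the paper does.
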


\begin{proof}
(1)
A basic argument was presented in \cite{Yam22a}. Here, we wish to elaborate this argument and present it in a generic form.
We thus demonstrate in detail the validity of the statements (a)--(c) of Definition \ref{def-twin} for the particular case of $\CC=\logcfl$ and $\DD=\twonpd$. The other case of $\CC=\logdcfl$ and $\DD=\twodpd$ is similarly proven.

(a) By elaborating a basic argument of \cite{Yam21,Yam22a,Yam22b}, we prove (a).
Let us assume that $\LL$ is induced from $(L,m)$.
We begin with an assumption that $(L,m)$ is in $\para\logcfl/\poly\cap \PHSP$.
There are a polynomially-bounded advice function $h$ and an advised aux-2npda $M$ that recognizes $K$ with the help of $h$ in time $m(x)^{O(1)}$ and space $O(\log{m(x)})$. Since $m$ is polynomially honest, a suitable polynomial $q$ satisfies $|x|\leq q(m(x))$ for all $x\in\Sigma^*$. We also assume that $|h(|x|)|\leq q(m(x))$ for all $x$.
Remark that $L_n^{(+)} = L\cap \Sigma_{(n)}$ and $L_n^{(-)} = \overline{L}\cap\Sigma_{(n)}$ for any $n\in\nat$, where $\Sigma_{(n)}=\{x\mid m(x)=n\}$. For any string $x$ in $L_n^{(+)}\cup L_n^{(-)}$, since $m(x)=n$, we obtain $|x|\leq q(m(x))=q(n)$.

Hereafter, we wish to derive the membership of $\LL$ to $\twonpd/\poly$.
Fix $n$ arbitrarily and consider the following 2npda $N_n$, as in \cite[Lemma 5.3]{Yam22a}. Let $\hat{h}(n)$ denote the string $h(0)\# h(1)\# \cdots \# h(q(n))$. Notice that $|\hat{h}(n)|\leq \sum_{i=0}^{q(n)} (|h(i)|+1) \leq (q(n)+1)^2$.
\vs{-1}
\begin{quote}
On input $x$, we first retrieve $n$. This is possible since $n$ is a fixed constant for $N_n$. If $|x|>q(n)$, then we reject the input. Otherwise, we retrieve the string $h(|x|)$ from $\hat{h}(n)$. We then run $M$ on $x$ using the advice string $h(|x|)$.
\end{quote}
\vs{-1}
\n The runtime of $N_n$ is upper-bounded by $m(x)^{O(1)}\cdot (q(n)|x|)^{O(1)}$ ($\subseteq (n|x|)^{O(1)}$). The machine $N_n$ requires a simulation of $M$ whose space usage is $O(\log{m(x)}) \subseteq O(\log{n})$.  Therefore, it is possible to encode the work tape content of $M$ together with $\hat{h}(n)$ into $N_n$'s inner states. This makes the total number of inner states needed for $N_n$ be $n^{O(1)}$.
Moreover, $N_n$ accepts all $x\in L_n^{(+)}$ and rejects all $x\in L_n^{(-)}$.
We thus conclude that  $\LL$ is in $\twonpd/\poly$.

Conversely, we assume that $\LL$ is in $\twonpd/\poly$.
Take a family $\{M_n\}_{n\in\nat}$ of polynomial-size 2npda's solving $\LL$ in time $p(n,|x|)$ for a suitable polynomial $p$.  There is a polynomial $q$ satisfying $L_n^{(+)}\cup L_n^{(-)}\subseteq \Sigma^{\leq q(n)}$ and $|Q_n|\leq q(n)$ for all $n\in\nat$, where $Q_n$ is a set of $M_n$'s inner states.
Since $L_n^{(+)} =L\cap \Sigma_{(n)}$ and $L_n^{(-)} = \overline{L}\cap\Sigma_{(n)}$, we obtain $\Sigma_{(n)} = L_n^{(+)}\cup L_n^{(-)} \subseteq \Sigma^{\leq q(n)}$. Hence, it follows that $|x|\leq q(n) = q(m(x))$ for any $x\in \Sigma_{(n)}$. In other words, since $\bigcup_{n\in\nat} \Sigma_{(n)} = \Sigma^*$, $m$ is polynomially honest.

Our goal is to prove that $(L,m)\in \para\logcfl/\poly\cap\PHSP$. We first encode each $M_n$ using an appropriately defined encoding scheme. With the use of such a scheme, $\pair{M_n}$ denotes the encoded string that expresses $M_n$.
We then define $h(n)=\pair{M_n}$ and set $\hat{h}(n)$ to be $h(0)\# h(1)\# \cdots \# h(q(n))$.
As in \cite[Lemma 5.2]{Yam22a}, let us consider the following advised aux-2npda $N$ taking $\hat{h}$ as advice.
\vs{-1}
\begin{quote}
On input $x$, compute $m(x)$ and set $n=m(x)$. Select $h(n)$ from the given advice string $\hat{h}(|x|)$. This is possible because $|x|\leq q(n)$. Recover $M_n$ from $h(n)=\pair{M_n}$ and simulate $M_n$ on $x$ within time $p(n,|x|)$.
\end{quote}
\vs{-1}
\n The runtime of $N$ is $O(p(n,|x|))\cdot (n|x|)^{O(1)}$, which equals  $m(x)^{O(1)}$, because of $n=m(x)$. Note that the working space is $O(\log{|x|})+ O(\log{|Q_n|})$, which equals $O(\log{n})$, where $Q_n$ denotes a set of $M_n$'s inner states.
These bounds imply that $(L,m)$ belongs to $\para\logcfl/\poly$.

(b)
Next, we wish to prove (b) in a way similar to a conversion given in \cite[Lemma 5.5]{Yam22a}. Assume that $(K,m)$ is induced from $\LL$ and that $(K,m)\in\para\logcfl/\poly\cap \PHSP$.
For each $n\in\nat$, let
$S_n = \{z\# x\mid z\in \Sigma^n-\{1^n\}, x\in \Sigma^*\}\cup \Sigma^n \cup S_{\#,n}^{(\geq2)}$.
Since $(K,m)$ is induced from $\LL$, there exists a size-incorporated family $\KK=\{(K_n^{(+)},K_n^{(-)})\}_{n\in\nat}$ of $\LL$ with $K_n^{(+)}=\{1^n\# x\mid x\in L_n^{(+)}\}$ and $K_n^{(-)}=\{1^n\# x\mid x\notin L_n^{(+)}\}\cup S_n$ satisfying that $K$ has the form $\bigcup_{n\in\nat} K_n^{(+)}$.
As noted before, $\overline{K}$ equals $\bigcup_{n\in\nat} K_n^{(-)} = (\bigcup_{n\in\nat} \{1^n\# x\mid x\notin L_n^{(+)}\}) \cup (\bigcup_{n\in\nat}S_n)$.

Since $(K,m)\in\para\logcfl/\poly\cap \PHSP$, there are a polynomially-bounded advice function $h$ and an advised aux-2npda $M$ recognizing $K$ with $h$ in $m(x)^{O(1)}$ time and $O(\log{m(x)})$ space. Moreover, we take a polynomial $q$ satisfying $|h(|x|)|\leq q(m(x))$ for all $x$.
Note that $m(1^n\#x) =n$ for all $x\in L_n^{(+)}\cup L_n^{(-)}$ and $m(w)=|w|$ for all other strings $w$.

We wish to construct a family $\{M_n\}_{n\in\nat}$ of 2npda's. Fix $n$ arbitrarily.
\vs{-1}
\begin{quote}
On input $x$, generate $1^n$. If $|x|>q(n)$, then reject. Otherwise, recover the string $h(1^n\#x)$ from inner states and run $M$ on $1^n\# x$ using $h(1^n\#x)$ as an advice string.
\end{quote}
\vs{-1}
\n For any $x$ of length at most $q(n)$, the runtime of $M_n$ on $x$ is $m(1^n\#x)^{O(1)} \subseteq n^{O(1)}$ because of $m(1^n\#x)=n$. Note that $|h(n+|x|+1)|= |h(|1^n\#x|)|\leq q(m(1^n\#x))=q(n)$.
Since the work space usage of $M$ on $1^n\#x$ is $O(\log{m(1^n\#x)})\subseteq O(\log{n})$, we can encode work tape contents of $M$ and $h(n+|x|+1)$ into inner states. The state complexity of $M_n$ then becomes $n^{O(1)}$. Obviously, $M_n$ accepts (resp., rejects) all $x$ satisfying $1^n\# x\in K_n^{(+)}$ (resp., $1^n\# x\in K_n^{(-)}$). Therefore, $\LL$ belongs to $\twonpd/\poly$.

(c) Assume that $(K,m)$ is induced from $\LL$.  Similarly to (b), we take a size-incorporated family $\KK=\{(K_n^{(+)},K_n^{(-)})\}_{n\in\nat}$ of $\LL$ with $K_n^{(+)}=\{1^n\#x\mid x\in L_n^{(+)}\}$ and $K_n^{(-)}=\{1^n\#x\mid x\notin L_n^{(+)}\}\cup S_n$. Moreover, we take a polynomial $q$ such that $L_n^{(+)}\cup L_n^{(-)}\subseteq \Sigma^{q(n)}$  for all $n\in\nat$ and that a set $Q_n$ of $M_n$'s inner states satisfies $|Q_n|\leq q(n)$ for all $n\in\nat$.
Similarly to (a), we then define $\hat{h}(m) = h(0)\# h(1)\# \cdots \# h(q(m))$, where $h(i)=\pair{M_i}$ for all $i\in[0,q(m)]_{\integer}$. In particular, when $m=|1^n\#x|=n+|x|+1$, since $q$ has only nonnegative coefficients, $q(n)\leq q(m)$ holds.
In addition, since $\twonpd/\poly$ is $\dl$-good, we can assume without loss of generality that $\LL$ is $\dl$-good (otherwise, we can take an $\dl$-good extension of $\LL$ as a new $\LL$). There is a polynomial-time log-space DTM $N_0$ that recognizes $\{1^n\#x\mid n\in\nat, x\in L_n^{(+)}\cup L_n^{(-)}\}$.

Consider the following procedure.
\vs{-1}
\begin{quote}
On input $w$, if $w$ is in $\Sigma^*\cup (\bigcup_{n\in\nat}S_n)$, then we reject. Assume otherwise. Since $S_n=\{z\# x\mid z\in\Sigma^n-\{1\}^*, x\in\Sigma^*\}\cup \Sigma^n\cup S_{\#,n}^{(\geq n)}$, $w$ has the form $1^n\#x$ for certain $n\in\nat$ and $x\in\Sigma^*$.
Since $1^n\# x\in K_n^{(+)}\cup K_n^{(-)}$, $x$ is in either $L_n^{(+)}$ or $\Sigma^n-L_n^{(+)}$. (*) We run $N_0$ on $1^n\#x$ to check whether $x\in L_n^{(+)}\cup L_n^{(-)}$ and, if not, reject immediately. We then recover $M_n$ from the advice string $\hat{h}(n+|x|+1)$ and simulate $M_n$ on $x$.
\end{quote}
\vs{-1}
\n We remark that the instruction (*) of the above procedure is not necessary. We include it here only for the later reference in (3c). It is not difficult to carry out this procedure on an appropriate advised aux-2npda $N$ with the advice function $\hat{h}$.
The runtime of $N$ on $w$ is at most $p(n,|x|)$, which is at most $p(n,q(n)) =n^{O(1)} =m(1^n\#x)^{O(1)}$. The work space is $O(\log{|Q_n|})$, which is at most $O(\log{q(n)}) = O(\log{n}) = O(\log{m(1^n\#x)})$ because of $q(n)=n^{O(1)}$.


(2) Our task is to prove the statements (a)--(b) of Definition \ref{def-twin} for the particular case of $\CC=\logucfl$ and $\DD=\twoupd$. The other case of $\CC=\ul$ and $\DD=\twou$ also follows by a similar argument.

(a) Here, we wish to show (a) by following an argument of (1). Assume that $\LL$ is induced from $(L,m)$. We begin with assuming that  $(L,m)$ falls in $\para\logucfl/\poly\cap \PHSP$. Take a polynomially-bounded advice function $h$ and an advised aux-2upda $M$ that recognizes $L$ in time $m(x)^{O(1)}$ using space $O(\log{m(x)})$ with the help of the advice string $h(m(x))$ given according to each input $x$. Since $L_n^{(+)}=L\cap \Sigma_{(n)}$ and $L_n^{(-)}=\overline{L}\cap\Sigma_{(n)}$ (where $\overline{L}=\Sigma^*-L$ and $\Sigma_{(n)}=\{x\mid m(x)=n\}$), it is possible to convert $M$ with $h$ working on all inputs $x$ in $\Sigma_{(n)}$ into an appropriate 2upda $N_n$ by integrating $h$ into $M$ and then simulating $M$ on the 2upda $N_n$ using polynomially many inner states.
Thus, $\LL$ belongs to $\twoupd$.

On the contrary, assume that $\LL\in\twoupd/\poly$. Take a family $\MM=\{M_n\}_{n\in\nat}$ of polynomial-size 2upda's solving $\LL$ in polynomial time. We can convert $\MM$ into a single polynomial-time log-space advised aux-2upda $N$ solving $(L,m)$ with a polynomially-bounded advice function $h$.

(b) Next, we intend to prove the statement (b).
Assume that $(K,m)$ is induced from $\LL$ via a size-incorporated family $\KK$ of $\LL$.
If $(K,m)$ belongs to $\para\logucfl/\poly\cap \PHSP$, then there are a polynomially-bounded advice function $h$ and an advised aux-2upda $M$ that solves $(K,m)$ in $O(m(z)^k)$ time using $O(\log{m(z)})$ space on every input $z$ of the form $(x,h(m(x)))$, where $k$ is an absolute constant.
Let $n$ denote an arbitrary number in $\nat$. We wish to construct the following 2npda $N_n$.

We first encode a content of the work tape into an inner state and  approximately define a new transition function of $N_n$ so that it simulates $M$ step by step.
Consider arbitrary $n$ and $x\in\Sigma_{(n)}$. Since $m$ and $h$ are polynomially bounded, the space usage of $N_n$ is upper bounded by $O(\log{m(z)})\subseteq O(\log{n})$ and the runtime of $N_n$ is upper-bounded by $O(m(z)^k)\subseteq O(n^{k'})$ for an appropriate constant $k'>k$.
By the definition of $N_n$, the family $\{N_n\}_{n\in\nat}$ turns out to solve $\LL$. We thus conclude that $\LL\in\twoupd/\poly$.

(3)
Consider the case of $\CC=\logucfl$ and $\DD=\twoupd\cap\LGOOD$. The statements (a)--(b) are the same as those of (2). We then focus on (c).
Recall the instruction (*) given in the description of $N$'s procedure in (1c). Unlike the case of $\twoupd/\poly$, this instruction (*) is necessary to make $N$ be an advise aux-2upda. The $\dl$-goodness of $(\twoupd\cap\LGOOD)/\poly$, nonetheless, guarantees the instruction (*).
\end{proof}

\section{Effects of Polynomial Ceiling Bounds}\label{sec:effects}

Let us recall the notion of polynomial ceiling, which refers to the input restriction to only polynomially long strings. We will see significant effects given by the polynomial ceilings.

\subsection{Elimination of Counters by Polynomial Ceiling Bounds}

In Section \ref{sec:reducing-counter}, we have discussed how to reduce the number of counters down to three or four.

In the presence of polynomial ceilings, it is possible to eliminate ``all'' counters from multi-counter automata and multi-counter pushdown automata.

\begin{theorem}\label{poly-ceiling}
Let $k$ be an arbitrary constant in $\nat^{+}$.
(1) $\twonct_k/\poly = \twon/\poly$. (2) $\twonpdct_k/\poly = \twonpd/\poly$. The same statements hold even if underlying nondeterministic machines are changed to deterministic ones. Moreover, the statements are true when underlying machines are all restricted to run in polynomial time.
\end{theorem}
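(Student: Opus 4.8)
The plan is to prove the two nontrivial inclusions $\twonct_k/\poly \subseteq \twon/\poly$ and $\twonpdct_k/\poly \subseteq \twonpd/\poly$; the reverse inclusions are immediate, since a 2nfa (resp., 2npda) is exactly the special case of a $k$-counter 2ncta (resp., 2npdcta) that never touches its counters. The deterministic and the polynomial-time refinements will then follow from the very same construction, which introduces neither nondeterminism nor more than a polynomial-factor slowdown, so I will not treat them separately.

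First I would exploit the polynomial ceiling to bound the counter contents a priori. Let $\LL$ be a family with polynomial ceiling $p$, solved by a polynomial-size family $\{M_n\}_{n\in\nat}$ of $k$-counter 2ncta's. By the remark in Section~\ref{sec:multiple-counters}, we may assume that each $M_n$ halts on all computation paths within time $q_0(n,|x|)$ for a fixed polynomial $q_0$. For every valid instance $x$ we have $|x|\le p(n)$, so the runtime on $x$ is at most $q(n):=q_0(n,p(n))$, a polynomial in $n$ alone. Because any single counter changes by at most one per step, each of the $k$ counters holds, along every computation path on a valid $x$, a value in $[0,q(n)]_{\integer}$.

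Next I would absorb the counter contents into the finite control. Define a 2nfa $M_n'$ whose inner-state set is $Q_n \times [0,q(n)]_{\integer}^k$, so that a state $(q,m_1,\dots,m_k)$ records both the state of $M_n$ and the current contents of its $k$ counters. A transition of $M_n$ that consults the scanned tape symbol together with the topmost counter symbols (which are determined by whether each $m_i$ equals $0$ or is positive) is simulated by the corresponding transition of $M_n'$ that updates $q$ and increments/decrements the relevant $m_i$ inside the state; if some counter would exceed $q(n)$, then $M_n'$ simply rejects, which can happen only on non-valid inputs of length $>p(n)$ and is therefore harmless for the promise problem. The number of inner states of $M_n'$ is $|Q_n|\cdot(q(n)+1)^k = n^{O(1)}$, since $k$ is constant and both $|Q_n|$ and $q(n)$ are polynomially bounded, so $\{M_n'\}_{n\in\nat}$ has polynomial size and on every valid instance faithfully reproduces the computation of $M_n$; hence $\LL\in\twon/\poly$.

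Finally, the pushdown case is handled by the identical state-merging trick applied to a polynomial-size family of $k$-counter 2npdcta's: the counters are folded into $Q_n$ while the stack alphabet $\Gamma_n$ and the push size $e$ are left untouched. Since $ssc(\cdot)=|Q|\cdot|\Gamma^{\le e}|$ and only the $|Q|$-factor gets multiplied by the polynomial $(q(n)+1)^k$, the resulting 2npda family again has polynomial stack-state complexity, giving $\LL\in\twonpd/\poly$. The one point that genuinely requires care—and the sole place the polynomial ceiling is essential—is the a priori polynomial bound $q(n)$ on the counter values: without it (that is, for arbitrarily long inputs) the counters could grow unboundedly and could not be encoded into a finite control of polynomial size. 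Everything else is routine bookkeeping, and the deterministic and polynomial-time versions are inherited verbatim because the construction is deterministic whenever $M_n$ is and preserves the runtime up to a polynomial factor.
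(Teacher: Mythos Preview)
Your proposal is correct and follows essentially the same approach as the paper: use the polynomial ceiling to bound the runtime (and hence each counter's content) by a polynomial $q(n)$ in $n$ alone, then fold the $k$ counter values into the finite control at a multiplicative cost of $(q(n)+1)^k$ in the state count, leaving the stack (if any) untouched. Your write-up is in fact a bit more explicit than the paper's (you spell out the new state set and handle the overflow case on non-valid inputs), but there is no substantive difference in strategy.
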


\begin{proof}
Here, we prove only (2) since the proof of (1) is in essence similar. Since  $\twonpd \subseteq \twonpdct_k$ for all $k\geq1$, we obtain $\twonpd/\poly \subseteq \twonpdct_k/\poly$.
We next show that $\twonpdct_k/\poly \subseteq \twonpd/\poly$.
Consider any family $\LL=\{(L_n^{(+)},L_n^{(-)})\}_{n\in\nat}$ of promise problems in $\twonpdct_k/\poly$ over alphabet $\Sigma$.
There exists a polynomial $p$ satisfying $L_n^{(+)}\cup L_n^{(-)}\subseteq \Sigma^{\leq p(n)}$ for all $n\in\nat$. Take any family $\{M_n\}_{n\in\nat}$ of polynomial-size 2npdcta's that solves $\LL$ with $k$ counters in time polynomial in $(n,|x|)$. Let $q$ denote a polynomial such that, for any $n\in\nat$ and any promised input $x\in\Sigma^*$,  $q(n,|x|)$ upper-bounds the runtime of $M_n$ on $x$. Since $p$ is a ceiling, all the $k$ counters of $M_n$ hold the number at most $q(n,p(n))$ on all ``valid''  instances. Define $r(n)=q(n,p(n))$. Since $r$ is a polynomial, it is possible to express the contents of the $k$ counters in the form of inner states. Therefore, without using any counter, we can simulate $M_n$ on the input $x$ by running an appropriate 2npda whose stack-state complexity is polynomially bounded. This implies that $\LL\in\twonpd/\poly$.

The second and the third parts of the theorem follow in a similar way as described above.
\end{proof}

Theorem \ref{poly-ceiling} immediately leads to important consequences (Corollary \ref{cor-collapse}).
To state these consequence, we need the
nonuniform (i.e., the Karp-Lipton style polynomial-size advice-enhanced extension)
computational model of \emph{two-way auxiliary nondeterministic pushdown automata} (or aux-2npda's). As their variants, we further introduce
\emph{two-way auxiliary unambiguous pushdown automata} (or aux-2upda's), which are 2upda equipped with auxiliary (work) tapes. The complexity class
$\mathrm{UAuxPDA,\!\!TISP}(n^{O(1)},\log{n})/\poly$ induced by advised aux-2upda's\footnote{An advised aux-2upda's takes any promised instance of the form $(x,h(|x|))$, where $h$ is a polynomially-bounded advice function, and ends in halting states whenever it halts.}
that run in time $n^{O(1)}$ using work space $O(\log{n})$ together with polynomially-bounded advice functions.

\begin{corollary}\label{cor-collapse}
(1) If $\twodct_4 = \twonct_4$, then $\dl/\poly=\nl/\poly$. (2) If $\ptime\twodpdct_3 = \ptime\twonpdct_3$, then $\logdcfl/\poly = \logcfl/\poly$.
\end{corollary}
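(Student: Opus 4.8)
The plan is to combine the counter-elimination theorem under polynomial ceilings (Theorem \ref{poly-ceiling}) with the canonical-twins correspondence between promise-problem families and parameterized decision problems (Proposition \ref{L-good-induced} and its finite-automaton analogue). Each implication proceeds in three moves. First, pass from the hypothesized equality of counter classes to the same equality of their polynomial-ceiling subclasses, using only that $\CC/\poly$ is by definition the subclass of $\CC$ consisting of the polynomial-ceiling families, so $\CC=\DD$ forces $\CC/\poly=\DD/\poly$. Second, discard the counters via Theorem \ref{poly-ceiling}, obtaining an equality of the counter-free nonuniform (stack-)state classes under $/\poly$. Third, transport this equality, through statement (a) of Definition \ref{def-twin} specialized to the identity size parameter $m_{\parallel}$, into the desired collapse of the advised standard classes.

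For part (1), the hypothesis $\twodct_4=\twonct_4$ gives $\twodct_4/\poly=\twonct_4/\poly$, and Theorem \ref{poly-ceiling}(1), in both its nondeterministic and deterministic forms, then yields $\twod/\poly=\twon/\poly$. I would next invoke the canonical-twins correspondence for the finite-automaton pairs $(\dl,\twod)$ and $(\nl,\twon)$; these are not literally listed in Proposition \ref{L-good-induced}, but follow verbatim from the same generic argument (and underlie Kapoutsis's equivalence \cite{Kap14}). Specializing statement (a) to $m=m_{\parallel}$, which is trivially polynomially honest so that membership in $\PHSP$ is automatic, and using the remark that $\para\CC/\poly$ reduces to the non-parameterized $\CC/\poly$ when $m=m_{\parallel}$, we obtain for every language $L$ that $L\in\nl/\poly$ iff the induced family $\LL_L$ lies in $\twon/\poly$, and likewise $L\in\dl/\poly$ iff $\LL_L\in\twod/\poly$. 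Combining these two equivalences with $\twod/\poly=\twon/\poly$ forces $L\in\nl/\poly\Leftrightarrow L\in\dl/\poly$, hence $\dl/\poly=\nl/\poly$.

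Part (2) follows the identical pattern with pushdown automata. From $\ptime\twodpdct_3=\ptime\twonpdct_3$ I would derive $\ptime\twodpdct_3/\poly=\ptime\twonpdct_3/\poly$ and then, by the polynomial-time version of Theorem \ref{poly-ceiling}(2), obtain $\ptime\twodpd/\poly=\ptime\twonpd/\poly$. Feeding this into the canonical-twins correspondence of Proposition \ref{L-good-induced}(1) for $(\logdcfl,\twodpd)$ and $(\logcfl,\twonpd)$, again specialized to $m_{\parallel}$, yields $L\in\logcfl/\poly\Leftrightarrow\LL_L\in\twonpd/\poly$ and $L\in\logdcfl/\poly\Leftrightarrow\LL_L\in\twodpd/\poly$; the equality of the two automata classes then gives $\logdcfl/\poly=\logcfl/\poly$.

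The step I expect to be the main obstacle is threading the polynomial-time restriction through the correspondence in part (2). For finite automata this is a non-issue, since by Geffert et al. \cite{GMP07} the polynomial-time requirement does not alter $\twod$ or $\twon$; but for pushdown automata the $\ptime$ prefix is genuinely restrictive, which is exactly why the hypothesis of part (2) carries it. The point to verify carefully is that the machine constructions inside Proposition \ref{L-good-induced}(1) are already polynomially time-bounded — the simulating 2npda's and aux-2npda's run in polynomial time, as $\logcfl$ is itself a polynomial-time class — so the correspondence is in fact one between $\para\logcfl/\poly\cap\PHSP$ and $\ptime\twonpd/\poly$, matching precisely the class produced in the second move. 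Confirming this alignment of the time bounds on both sides of the correspondence is the only delicate bookkeeping; everything else is a routine specialization to $m_{\parallel}$.
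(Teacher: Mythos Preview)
Your proposal is correct and follows essentially the paper's own approach. The only difference is presentational: after reducing to $\twod/\poly=\twon/\poly$ (resp.\ $\ptime\twodpd/\poly=\ptime\twonpd/\poly$) via Theorem~\ref{poly-ceiling}, the paper simply cites the implications $\twod/\poly=\twon/\poly\Rightarrow\dl/\poly=\nl/\poly$ \cite{Kap14} and $\ptime\twodpd/\poly=\ptime\twonpd/\poly\Rightarrow\logdcfl/\poly=\logcfl/\poly$ \cite{Yam21} as known results, whereas you re-derive them through the canonical-twins framework of Section~\ref{sec:parameterization} specialized to $m_{\parallel}$; your route is more self-contained, the paper's is shorter, and your flagged ``main obstacle'' about the $\ptime$ alignment is exactly the content of \cite{Yam21}.
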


This corollary is obtainable from Theorem \ref{poly-ceiling} as well as the following two results: (i) $\twod/\poly = \twon/\poly$ implies $\dl/\poly=\nl/\poly$ \cite{Kap14} and
(ii) $\ptime\twodpd/\poly= \ptime\twonpd/\poly$ implies  $\logdcfl/\poly = \logcfl/\poly$ \cite{Yam21}.

In the end of this section, we draw another statement concerning the separation of nonuniform {(stack-)state} complexity classes under a working hypothesis called the \emph{linear space hypothesis} (LSH) \cite{Yam23a}. Lately, Yamakami \cite{Yam19a} introduced a nonuniform variant of this hypothesis, which is referred to as the \emph{nonuniform linear space hypothesis} (nonuniform-LSH). Let us consider a restriction of the 2CNF Boolean formula satisfiability problem (2SAT) and write $\mathrm{2SAT}_3$ for the collection of all satisfying 2CNF formulas, in which each variable appears at most 3 times in the form of literals. For example, if $\phi\equiv (x\vee y)\wedge (\bar{x}\vee z) \wedge (\bar{x}\vee \bar{y})$, then $y$ appears twice and $x$ appears exactly 3 times in the form of literals. The nonuniform linear space hypothesis\footnote{Originally, this hypothesis was defined in the setting of parameterized computing but, for simplicity reason, we here adapt it to the  non-parameterized computing setting.}
asserts that $\mathrm{2SAT}_3$ is not solvable in polynomial time using $O(n^{\varepsilon}\ell(n))$ space for any constant $\varepsilon\in[0,1)$ and a logarithmically bounded function $\ell$.

\begin{proposition}
If the nonuniform linear space hypothesis is true, then $\twodct_4\neq \twonct_4$ and $\twodpdct_3\neq \twonpdct_4$.
\end{proposition}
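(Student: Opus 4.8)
The plan is to argue by contraposition through Corollary~\ref{cor-collapse}: I would show that each claimed equality of counter classes forces a collapse of a standard advised complexity class, and that this collapse in turn places $\mathrm{2SAT}_3$ inside a simultaneously time- and space-bounded nonuniform class, contradicting the nonuniform linear space hypothesis. In both cases the only genuinely non-mechanical ingredient is a simultaneous time--space simulation, trivial in the finite-automaton case but substantive in the pushdown case.

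For the first separation, suppose toward a contradiction that $\twodct_4 = \twonct_4$. Corollary~\ref{cor-collapse}(1) then yields $\dl/\poly = \nl/\poly$. Since $\mathrm{2SAT}_3$ lies in $\nl$ (it is a restriction of the $\nl$-complete problem $\mathrm{2SAT}$, solvable by the usual $\nl$ reachability algorithm), we obtain $\mathrm{2SAT}_3 \in \nl/\poly = \dl/\poly$; that is, $\mathrm{2SAT}_3$ is decided by a deterministic log-space Turing machine taking polynomially long advice, hence in polynomial time and $O(\log n)$ space. Taking $\varepsilon=0$ and $\ell(n)=\log n$, this contradicts the nonuniform linear space hypothesis, so $\twodct_4 \neq \twonct_4$.

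For the second separation, suppose $\twodpdct_3 = \twonpdct_4$. By Proposition~\ref{reducing-counter} we have $\twonpdct_4 = \twonpdct_3$, so $\twodpdct_3 = \twonpdct_3$. Using the polynomial-time halting property of multi-counter machines remarked earlier, each of these classes coincides with its polynomial-time-restricted subclass, whence $\ptime\twodpdct_3 = \ptime\twonpdct_3$; Corollary~\ref{cor-collapse}(2) then gives $\logdcfl/\poly = \logcfl/\poly$. Now $\mathrm{2SAT}_3 \in \nl \subseteq \logcfl \subseteq \logcfl/\poly = \logdcfl/\poly$. To convert this membership into a true space bound I would invoke Cook's classical theorem that every problem in $\logdcfl$ is recognizable simultaneously in polynomial time and $O(\log^2 n)$ space (i.e.\ $\logdcfl \subseteq \mathrm{SC}^{2}$), which carries over to the advised setting to give $\logdcfl/\poly \subseteq \mathrm{SC}^{2}/\poly$. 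Thus $\mathrm{2SAT}_3$ would be solvable in polynomial time and $O(\log^2 n)$ space with polynomial advice; since $\log n = O(n^{\varepsilon})$ for every $\varepsilon\in(0,1)$, we have $O(\log^2 n) \subseteq O(n^{\varepsilon}\log n)$, so with $\ell(n)=\log n$ this again contradicts the nonuniform linear space hypothesis, giving $\twodpdct_3 \neq \twonpdct_4$.

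I expect the main obstacle to be exactly this last conversion in the pushdown case: a $\logdcfl$ machine has a free stack that may grow to polynomial height, so one cannot naively read off a sublinear-space standard-Turing-machine algorithm from the collapse $\logdcfl/\poly = \logcfl/\poly$. The argument therefore hinges on the nontrivial $\mathrm{SC}$-type simulation of deterministic auxiliary pushdown computations, together with a verification that the simulation survives the addition of advice and that the resulting $O(\log^2 n)$ space genuinely falls within the excluded regime $O(n^{\varepsilon}\ell(n))$ of the hypothesis. The finite-automaton case, by contrast, is immediate once the $\dl/\poly = \nl/\poly$ collapse is in hand, since log-space advised computation already supplies the required polynomial-time, $O(\log n)$-space bound.
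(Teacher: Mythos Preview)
Your argument is correct and follows the same high-level route as the paper: both derive the separations by contraposition through Corollary~\ref{cor-collapse}, reducing each assumed collapse of counter classes to a collapse of standard advised classes that then contradicts the nonuniform linear space hypothesis. Where the approaches diverge is in how that final contradiction is reached. The paper simply cites \cite{Yam23a} for the implications ``nonuniform-LSH $\Rightarrow$ $\dl/\poly\neq\nl/\poly$'' and ``nonuniform-LSH $\Rightarrow$ $\logdcfl/\poly\neq\logcfl/\poly$'', treating them as black boxes; you instead supply direct arguments, in particular invoking Cook's $\logdcfl\subseteq\mathrm{SC}^2$ simulation to turn the pushdown-side collapse into a concrete $O(\log^2 n)$-space upper bound on $\mathrm{2SAT}_3$. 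Your version is thus more self-contained and makes explicit the simultaneous time--space tradeoff that the paper hides inside the citation, at the cost of importing Cook's theorem. Your handling of the passage from $\twodpdct_3=\twonpdct_3$ to the polynomial-time-restricted equality matches the paper's (stated contrapositively there), and your reduction of $\twonpdct_4$ to $\twonpdct_3$ via Proposition~\ref{reducing-counter} cleans up the minor indexing mismatch in the statement.
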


\begin{proof}
Assume that nonuniform-LSH is true. In \cite{Yam23a}, it was shown that , under the assumption of LSH, we obtain $\dl\neq\nl$ and $\logdcfl\neq\logcfl$. By adding advice to underlying machines, it is possible to prove that $\dl/\poly\neq\nl/\poly$ and $\logdcfl/\poly\neq \logcfl/\poly$ if nonuniform-LSH is true. The contrapositive statement of Corollary \ref{cor-collapse} then yields $\twodct_4\neq\twonct_4$ and $\ptime\twodpdct_3\neq \ptime\twonpdct_3$. The latter inequality further leads to $\twodpdct_3\neq \twonpdct_3$, as requested.
\end{proof}

\subsection{Complementation of 2N and 2NPD}\label{sec:complementary}

Let us look further into the question raised in Section \ref{sec:state-complexity} concerning the closure property of nonuniform {(stack-)state} complexity classes under complementation.
Hereafter, we particularly discuss the complementation closures of $\twon$ and $\twonpd$ when all valid instances are limited to polynomially long strings. More precisely, we wish to prove the following two equalities.

\begin{theorem}\label{complement-collapse}
(1) $\twon/\poly = \co\twon/\poly$. (2) $\twonpd/\poly = \co\twonpd/\poly$. The statement (2) still holds even if underlying 2npda's run in polynomial time.
\end{theorem}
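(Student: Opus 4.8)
The plan is to derive both equalities purely by chaining together three results already established in the excerpt: the counter-elimination theorem under polynomial ceilings (Theorem \ref{poly-ceiling}) and the two complementation-closure results for multi-counter machines (Theorem \ref{counter-four} for counter automata and Theorem \ref{co-simulation} for counter pushdown automata). No new machine construction is needed; the whole argument is a bookkeeping chain of class equalities, and because each chain establishes a full equality there is no need to argue the two inclusions separately.

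First I would record one elementary commutation fact: the complementation operator commutes with the polynomial-ceiling restriction, i.e. $\co(\CC/\poly) = (\co\CC)/\poly$ for every class $\CC$ of promise-problem families. This holds because passing from $\LL=\{(L_n^{(+)},L_n^{(-)})\}$ to $\co\LL=\{(L_n^{(-)},L_n^{(+)})\}$ leaves the support $L_n^{(+)}\cup L_n^{(-)}$ unchanged, so a family has a polynomial ceiling if and only if its complement does. With this fact in hand the parsing of ``$\co\twon/\poly$'' is immaterial.

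Next I would assemble the chain for part (1):
$$\co\twon/\poly = \co\bigl(\twonct_4/\poly\bigr) = (\co\twonct_4)/\poly = \twonct_4/\poly = \twon/\poly,$$
where the first and last equalities are Theorem \ref{poly-ceiling}(1) (counter elimination under the ceiling), the second is the commutation fact above, and the middle equality $\co\twonct_4 = \twonct_4$ is Theorem \ref{counter-four}. Part (2) is proven by the identical chain with $\twonpd$, $\twonpdct_3$, Theorem \ref{poly-ceiling}(2), and Theorem \ref{co-simulation} (which supplies $\co\twonpdct_3 = \twonpdct_3$) replacing their counterparts. For the polynomial-time strengthening of (2) I would invoke the polynomial-time version of Theorem \ref{poly-ceiling}(2) together with the observation (noted where $\twonpdct_k$ is defined) that multi-counter pushdown automata can be forced to halt in polynomial time on every computation path; hence the counter class $\twonpdct_3$ already sits in the polynomial-time regime and the same chain goes through verbatim.

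The genuine difficulty of this theorem does not live in this section at all: the substantive work, namely the inductive-counting argument realizing complementation inside $\twonct_4$ and the stack-history simulation realizing it inside $\twonpdct_3$, has already been expended in proving Theorems \ref{counter-four} and \ref{co-simulation}. The only point here that requires any care is \emph{compatibility} of the two operations, i.e. verifying that one may first translate a ceiling-bounded family into an equivalent counter family (Theorem \ref{poly-ceiling}), then complement it while remaining inside the counter class (Theorems \ref{counter-four}/\ref{co-simulation}), and finally strip the counters off again (Theorem \ref{poly-ceiling}) without ever breaking the polynomial ceiling. The commutation fact of the second paragraph is precisely what certifies that these three moves compose legitimately.
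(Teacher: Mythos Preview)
Your proposal is correct and follows essentially the same route as the paper: the paper also derives (1) from Theorem \ref{counter-four} plus Theorem \ref{poly-ceiling}(1), derives (2) from Theorem \ref{co-simulation} plus Theorem \ref{poly-ceiling}(2), and dismisses the polynomial-time variant with ``a similar argument.'' Your explicit isolation of the commutation fact $\co(\CC/\poly)=(\co\CC)/\poly$ is a nice touch that the paper leaves implicit.
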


It is important to note that the statements (1)--(2) of Theorem \ref{complement-collapse} do not require the use of parameterized versions of $\nl$ and $\logcfl$. This is rather a direct consequence of Theorems \ref{counter-four} and \ref{co-simulation} together with Theorem \ref{poly-ceiling}.
This fact exemplifies the strength of the additional use of counters provided to underlying finite automata.


\vs{-2}
\begin{proofof}{Theorem \ref{complement-collapse}}
(1) By Theorem \ref{counter-four}, $\twonct_4 = \co\twonct_4$ follows.
From this fact, by restricting every input length to be polynomially bounded, we obtain $\twonct_4/\poly = \co\twonct_4/\poly$. By Theorem \ref{poly-ceiling}(1), we then conclude that $\twon/\poly = \co\twon/\poly$.

(2) Theorem \ref{co-simulation} shows that $\twonpdct_3=\co\twonpdct_3$. We then apply Theorem \ref{poly-ceiling}(2) and obtain $\twonpd/\poly = \co\twonpd/\poly$.

The last part of the theorem also follows by a similar argument.
\end{proofof}

\subsection{Unambiguity of 2N and 2NPD}

We turn our attention to the question of whether we can make 2nfa's and 2npda's unambiguous.
In particular, we look into a relationship between $\ptime\twon$ and $\ptime\twou$ as well as between $\twonpd$ and $\twoupd$ when all promise problems are limited to having polynomial ceilings.

Unlike Theorem \ref{complement-collapse}, which relies on the direct simulation of 2npdcta's, it is not known how to simulate 2npdcta's by 2updcta's directly. Hence, we need to resort to a different strategy.

\begin{theorem}\label{NPD-vs-UPD}
(1) $\twon/\poly = \twou/\poly$. (2) $\ptime\twonpd/\poly = \ptime\twoupd/\poly$.
\end{theorem}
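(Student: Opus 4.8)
The plan is to prove each inclusion $\twon/\poly \subseteq \twou/\poly$ and $\ptime\twonpd/\poly \subseteq \ptime\twoupd/\poly$ (the reverse inclusions being immediate, since every unambiguous machine is a special nondeterministic one, giving $\twou\subseteq\twon$ and $\twoupd\subseteq\twonpd$) by transferring the problem into the realm of \emph{parameterized} advised complexity classes, where the required ``unambiguation'' is supplied by known nonuniform collapses rather than by any direct counter simulation (which, as noted, is unavailable). Concretely, I would rely on the two Reinhardt--Allender-style equalities $\nl/\poly=\ul/\poly$ and $\logcfl/\poly=\logucfl/\poly$, recast in the parameterized form $\para\nl/\poly=\para\ul/\poly$ and $\para\logcfl/\poly=\para\logucfl/\poly$; these are the only genuinely external inputs, and everything else is routing through the canonical-twins machinery of Proposition~\ref{L-good-induced}.

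For statement (1), I would take any $\LL\in\twon/\poly$ and let $(K,m)$ be the parameterized decision problem induced from $\LL$. Because $\LL$ has a polynomial ceiling, its size parameter $m$ is polynomially honest, so $(K,m)\in\PHSP$. Applying the canonical-twins correspondence for the pair $(\nl,\twon)$ --- the finite-automaton analog of Proposition~\ref{L-good-induced}(1), valid because $\twon$ is $\dl$-good --- statement (c) yields $(K,m)\in\para\nl/\poly\cap\PHSP$. Invoking $\para\nl/\poly=\para\ul/\poly$ upgrades this to $(K,m)\in\para\ul/\poly\cap\PHSP$. Finally, since $(\ul,\twou)$ are pseudo-canonical twins by Proposition~\ref{L-good-induced}(2), statement (b) delivers $\LL\in\twou/\poly$. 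Note the pleasant asymmetry that makes this work: the ``outward'' step uses (c) on the nondeterministic side, where full canonical twins are available, whereas the ``inward'' step uses only (b) on the unambiguous side, where pseudo-canonical twins already suffice, so the $\LGOOD$ refinement of Proposition~\ref{L-good-induced}(3) is not needed here.

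Statement (2) follows the same three-step route, replacing $\nl\to\logcfl$, $\ul\to\logucfl$, $\twon\to\twonpd$, and $\twou\to\twoupd$, and carrying the polynomial-time restriction throughout (which is already built into the $\logcfl$ characterization via poly-time log-space aux-2npda's, matching the $\ptime$ prefix). Here statement (c) of the canonical twins $(\logcfl,\twonpd)$ from Proposition~\ref{L-good-induced}(1) sends $\LL$ to $\para\logcfl/\poly\cap\PHSP$, the collapse $\para\logcfl/\poly=\para\logucfl/\poly$ injects the unambiguity, and statement (b) of the pseudo-canonical twins $(\logucfl,\twoupd)$ returns to $\twoupd/\poly$.

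The main obstacle is securing the two parameterized collapses, and especially the pushdown one. The equality $\nl/\poly=\ul/\poly$ is the classical Reinhardt--Allender theorem, and its parameterized lift is routine, since under polynomial honesty the advice length and the time and space bounds are all polynomially tied to $m(x)$. The $\logcfl$ analog $\logcfl/\poly=\logucfl/\poly$ is the delicate ingredient: it amounts to isolating a unique accepting \emph{proof tree} (not merely a unique path) of a poly-time log-space aux-2npda, so one must verify that the isolation-lemma weighting can be performed within the $O(\log m(x))$ space budget and encoded in only $\mathrm{poly}(m(x))$ advice bits. Once that collapse is in hand, the remaining work is the routine canonical-twins bookkeeping --- tracking the $\PHSP$ and $\dl$-goodness side conditions and confirming that the two finite-automaton twin correspondences $(\nl,\twon)$ and $(\ul,\twou)$ behave exactly as their pushdown counterparts in Proposition~\ref{L-good-induced}.
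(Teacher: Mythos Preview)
Your proposal is correct and follows essentially the same route as the paper: for (2) the paper also passes through the induced parameterized problem $(K,m)$, uses the canonical-twins direction (c) for $(\logcfl,\twonpd)$ to land in $\para\logcfl/\poly\cap\PHSP$, invokes the Reinhardt--Allender collapse (lifted to the parameterized setting, which the paper spells out as a separate claim rather than calling it routine), and then returns via direction (b) of the pseudo-canonical twins $(\logucfl,\twoupd)$; for (1) the paper simply cites \cite{Yam22b}, and your sketch via $(\nl,\twon)$ and $(\ul,\twou)$ is exactly the finite-automaton analog of that argument. Your observation that only (b) is needed on the unambiguous side, so the $\LGOOD$ refinement of Proposition~\ref{L-good-induced}(3) is unnecessary, matches the paper's usage precisely.
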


Theorem \ref{NPD-vs-UPD}(1) was already proven in \cite{Yam22b}. However, we do not know, for instance, whether $\twon=\twou$ or even  $\twonct_k=\twouct_k$ for any $k\geq4$ even though (1) holds.
The latter equality is compared to Theorem \ref{counter-four}, which states that $\twonct_4=\co\twonct_4$.
Hereafter, we focus on the statement (2) of Theorem \ref{NPD-vs-UPD} and provide its proof.
The basis of the proof relies on the result of Reinhardt and Allender \cite{RA00}, which asserts that $\logcfl/\poly$ coincides with $\mathrm{UAuxPDA,\!\!TISP}(n^{O(1)},\log{n})/\poly$ (which is currently expressed as $\logucfl/\poly$).


\vs{-2}
\begin{proofof}{Theorem \ref{NPD-vs-UPD}(2)}
Since unambiguity is a special case of nondeterminism, $\twoupd$ is obviously contained in $\twonpd$. Thus, $\ptime\twoupd/\poly \subseteq \ptime\twonpd/\poly$ immediately follows.
In what follows, we intend to prove the opposite inclusion under the polynomial runtime restriction, namely, $\ptime\twonpd/\poly \subseteq \ptime\twoupd/\poly$.

The proof constitutes the verification of the following claim concerning $\logucfl/\poly$ and $\ptime\twoupd$. A similar claim was proven first in \cite{Yam22a}, and later proven in \cite{Yam22b} for $\ul/\poly$ and in \cite{Yam21} for $\logcfl/\poly$. Refer to these references for their individual arguments.

\begin{yclaim}\label{2UPD-auxpda-02}
The following two statements hold.
\renewcommand{\labelitemi}{$\circ$}
\begin{itemize}\vs{-1}
  \setlength{\topsep}{-2mm}%
  \setlength{\itemsep}{1mm}%
  \setlength{\parskip}{0cm}%

\item[(i)] $\logcfl \subseteq \logucfl/\poly$ implies $\para\logcfl/\poly \cap \PHSP  \subseteq \para\logucfl/\poly$.

\item[(ii)] $\para\logcfl/\poly \cap \PHSP \subseteq \para\logucfl/\poly$ implies $\ptime\twonpd/\poly \subseteq \ptime\twoupd$.
\end{itemize}
\end{yclaim}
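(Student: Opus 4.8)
The plan is to prove the two implications by different means: implication (i) by a padding argument that folds the size parameter and the advice string into the input so that the non-parameterized hypothesis $\logcfl\subseteq\logucfl/\poly$ becomes applicable, and implication (ii) purely by the canonical-twins machinery of Proposition~\ref{L-good-induced}.

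For (i), suppose $(L,m)\in\para\logcfl/\poly\cap\PHSP$, witnessed by an advised aux-2npda $M$ with a polynomially bounded advice function $h$ running in time $m(x)^{O(1)}$ and space $O(\log{m(x)})$, and let $q$ be a polynomial with $|x|\leq q(m(x))$ (polynomial honesty). First I would absorb both the parameter and the advice into a single non-parameterized language
\[
B=\{\,1^n\#x\#w \mid m(x)=n \text{ and } M \text{ accepts } x \text{ using } w \text{ as its advice string}\,\}.
\]
An unadvised aux-2npda recognizes $B$ by first verifying $m(x)=n$ in logarithmic space (using that $m$ is a logspace size parameter) and then simulating $M$ on $x$ with $w$ read off the input in place of the advice tape; since $n=m(x)$ and the input has length at least $n$, the bounds $m(x)^{O(1)}$ and $O(\log{m(x)})$ become polynomial time and logarithmic space in the input length, so $B\in\logcfl$. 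The hypothesis then gives $B\in\logucfl/\poly$ via an unambiguous aux-2upda $M'$ with advice $h'$. To transfer this back, I would build an advised aux-2upda $N$ that, on input $x$, computes $n=m(x)$ and simulates $M'$ on $1^n\#x\#h(|x|)$; the advice of $N$ at parameter $n$ packages $h(0)\#h(1)\#\cdots\#h(q(n))$ together with the relevant values of $h'$, exactly as in the $\hat{h}$ construction of Proposition~\ref{L-good-induced}(a), so that $N$ can select $h(|x|)$ and the correct $h'$-string from its own advice. Unambiguity is inherited from $M'$ because $1^n\#x\#h(|x|)$ is always a valid instance of $B$, and the length accounting keeps the advice polynomially bounded in $n$; this establishes $(L,m)\in\para\logucfl/\poly$.

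For (ii), I would run the canonical-twins chain. Let $\LL\in\ptime\twonpd/\poly$; in particular $\LL$ has a polynomial ceiling. Induce $(K,m)$ from $\LL$ as in Definition~\ref{def-induce}(2). Since $(\logcfl,\twonpd)$ is canonical twins by Proposition~\ref{L-good-induced}(1), statement (c) of Definition~\ref{def-twin} yields $(K,m)\in\para\logcfl/\poly\cap\PHSP$. Applying the hypothesis of (ii) gives $(K,m)\in\para\logucfl/\poly$, and combined with the $\PHSP$-membership just obtained we have $(K,m)\in\para\logucfl/\poly\cap\PHSP$. Because $(\logucfl,\twoupd)$ is pseudo-canonical twins by Proposition~\ref{L-good-induced}(2), statement (b) is available and delivers $\LL\in\twoupd/\poly$; the underlying construction runs in polynomial time, and since $\LL$ has a polynomial ceiling, membership in the ceiling-restricted class coincides with membership in the base class, so $\LL\in\ptime\twoupd$, as required.

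The step I expect to be the main obstacle is the advice bookkeeping in (i): the advice of the unambiguous machine $M'$ is naturally indexed by the full length $|1^n\#x\#h(|x|)|$, which varies with $|x|$, whereas the advice of $N$ must depend only on the parameter $n=m(x)$. The $\hat{h}$-concatenation resolves this, but it requires verifying that the total advice length stays polynomial in $n$ (via polynomial honesty together with the polynomial bound on $h'$) and that the correct substring can be located within logarithmic work space without disturbing unambiguity. In (ii) the only delicate point is that $(\logucfl,\twoupd)$ is merely pseudo-canonical, so statement (c) is unavailable; fortunately the chain invokes only statement (b) in that direction, so this gap is harmless, provided the $\PHSP$-membership is carried along explicitly so that the hypothesis of (ii) can legitimately be applied.
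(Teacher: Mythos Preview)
Your argument for (ii) is essentially identical to the paper's: induce $(K,m)$ from $\LL$, apply direction (c) of the canonical-twins pair $(\logcfl,\twonpd)$ to obtain $(K,m)\in\para\logcfl/\poly\cap\PHSP$, invoke the hypothesis, and return via direction (b) of the pseudo-canonical-twins pair $(\logucfl,\twoupd)$. Your observation that only (b) is needed on the return leg, so pseudo-canonicity suffices, is exactly the point the paper relies on.

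For (i) your route differs from the paper's in the choice of intermediate language. The paper pads only with a unary tally, setting $P=\{(x,1^t)\mid x\in L,\ m(x)\le t\}$, shows $P\in\logcfl/\poly$ (using the advice $\hat h(t)=h(0)\#\cdots\#h(t)$), and then applies the hypothesis to place $P$ in $\logucfl/\poly$; the pull-back simply runs the resulting unambiguous machine on $(x,1^{m(x)})$ with its own advice $g(m(x))$, so the advice re-indexing is trivial. You instead fold the original advice string $w$ into the input and land in \emph{unadvised} $\logcfl$, which has the advantage that the hypothesis $\logcfl\subseteq\logucfl/\poly$ applies literally (the paper tacitly uses that the hypothesis lifts to $\logcfl/\poly$). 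The cost is exactly the bookkeeping you anticipate: the new advice $h'$ is indexed by $|1^n\# x\# h(|x|)|$, so $N$'s advice at parameter $n$ must tabulate both $h(0),\ldots,h(q(n))$ and the finitely many relevant $h'$-values, all of total length polynomial in $n$ by honesty. One detail worth making explicit in your write-up: the running-time bound on $M$ is promised only for the correct advice, so the aux-2npda recognising $B$ should clock the simulation of $M$ at $n^{O(1)}$ steps and reject on timeout; this is harmless because $N$ only ever queries $M'$ on strings of the form $1^n\# x\# h(|x|)$, where the clocked and unclocked behaviours agree.
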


It is known that $\logcfl \subseteq \logucfl/\poly$ \cite{RA00}. From this fact, Claim \ref{2UPD-auxpda-02}(i) makes $\para\logcfl/\poly\cap \PHSP$ collapse to $\para\logucfl/\poly$, and Claim \ref{2UPD-auxpda-02}(ii) further yields $\ptime\twonpd/\poly \subseteq \ptime\twoupd$, as requested.

Hereafter, we wish to give the proof of Claim \ref{2UPD-auxpda-02}.


(i) Let us begin with the proof of Claim \ref{2UPD-auxpda-02}(i).
Assume that $\logcfl\subseteq \logucfl/\poly$ and let $(L,m)$ denote any parameterized decision problem in $\para\logcfl/\poly\cap\PHSP$. By definition, $m$ is log-space computable.
Since $m$ is also polynomially honest, there is a polynomial $q$ satisfying $|x|\leq q(m(x))$ for all $x$. We intend to turn $(L,m)$ into its corresponding language, say, $P$ as follows.
Since $(L,m)\in \para\logcfl/\poly$, there are a polynomially-bounded advice function $h$ and an advised aux-2npda $M_0$ such that $M_0$ recognizes the language $\{(x,h(m(x)))\mid x\in L\}$ in time $m(x)^{O(1)}$ and space $O(\log{m(x)})$.
We define $L_n^{(+)}=L\cap \Sigma_{(n)}$ and $L_n^{(-)} = \overline{L}\cap\Sigma_{(n)}$ for all indices $n\in\nat$, where $\overline{L}=\Sigma^*-L$ and $\Sigma_{(n)} = \{x\in \Sigma^*\mid m(x)=n\}$.
The desired set $P$ is then defined to be $\{(x,1^t)\mid x\in L,t\in\nat,m(x)\leq t\}$.

To show that $P\in\logcfl/\poly$, we consider the following algorithm. Let $\hat{h}(t)$ express the string $h(0)\# h(1)\# \cdots \# h(t)$.

\vs{-1}
\begin{quote}
On input $(x,1^t)$ with advice string $\hat{h}(t)$, first check whether $m(x)> t$ using log space. If so, reject the input immediately. This is possible because $m$ is log-space computable. Otherwise, run $M_0$ on $x$ with the help of advice string $h(m(x))$, which is retrieved from $\hat{h}(t)$.
\end{quote}
\vs{-1}

This algorithm can be carried out on an appropriate advised aux-2npda running in time $(t|x|)^{O(1)}$ and space $O(\log{t|x|})$. Hence, we obtain $P\in\logcfl/\poly$.

Our assumption then guarantees that $P$ belongs to $\logucfl/\poly$. Take an advice function $g$ and an advised aux-2upda $M_1$ for $P$.
From these $M_1$ and $g$, it is possible to define another advised aux-2upda $N$ that works as follows. On input $x$, compute $m(x)$ using log space and run $M_1$ on the input $(x,1^{m(x)})$ with advice string $g(m(x))$. This implies that $L$ is recognized by $N$ in time $m(x)^{O(1)}$ using space $O(\log{m(x)})$ with the help of $g$. Hence, we can conclude that  $(L,m)$ falls in $\para\logucfl/\poly$.


(ii) We then prove Claim \ref{2UPD-auxpda-02}(ii). Assume that $\para\logcfl/\poly \cap \PHSP \subseteq \para\logucfl/\poly$.
Take any family $\LL=\{(L_n^{(+)},L_n^{(-)})\}_{n\in\nat}$ of promise problems in $\ptime\twonpd/\poly$ over alphabet $\Sigma$ and
take the parameterized decision problem $(K,m)$ induced from $\LL$ via a size-incorporated family $\KK=\{(K_n^{(+)},K_n^{(-)})\}_{n\in\nat}$ of $\LL$. Notice that $K=\bigcup_{n\in\nat}K_n^{(+)}$ and $\overline{K}=\bigcup_{n\in\nat}K_n^{(-)}$.
Since $\twonpd$ is $\dl$-good as remarked earlier,
the statements (b)\&(c) of Definition \ref{def-twin} guarantee that $(K,m)\in\para\logcfl/\poly\cap\PHSP$ iff $\LL\in\ptime\twonpd/\poly$. As a consequence, we obtain $(K,m)\in\para\logcfl/\poly$.
Since $\para\logcfl/\poly\cap\PHSP \subseteq \para\logucfl/\poly$ by our assumption, $(K,m)$ belongs to $\para\logucfl/\poly$. By using Definition \ref{def-twin}(b), we can conclude that $\LL$ falls in $\ptime\twoupd/\poly$.
Therefore, Claim \ref{2UPD-auxpda-02}(ii) is true.

\s

This completes the entire proof of the theorem.
\end{proofof}

\section{Case of Unary Alphabets}\label{sec:unary-inputs}

Through Section \ref{sec:effects}, we have discussed collapses of nonuniform state complexity classes when promised inputs are all limited to polynomially long ones.
Here, we briefly mention the simplest case of unary instances in comparison with polynomial ceilings.

In the simple case of unary inputs, let us first recall the result of Geffert, Mereghetti, and Pighizzini \cite{GMP07}, who demonstrated a simulation of a ``complementary'' 2nfa (i.e., a two-way finite automaton making co-nondeterministic moves) on unary inputs by another 2nfa with a polynomial increase of the state complexity. From this fact, we can instantly draw a conclusion that $\co\twon/\mathrm{unary}$ coincides with $\twon/\mathrm{unary}$ as stated in Fig.~\ref{fig:hierarchy}. This collapse was already mentioned in \cite{Kap09}.

\begin{proposition}
$\twon/\unary = \co\twon/\unary$.
\end{proposition}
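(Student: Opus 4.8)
The plan is to reduce the stated equality to a single inclusion and then invoke the unary complementation result of Geffert, Mereghetti, and Pighizzini \cite{GMP07}. First I would observe that, since $\co\co\LL=\LL$ for every family $\LL$ of promise problems, the operator $\co$ is an involution on these classes, so $\co(\co\twon/\unary)=\twon/\unary$. Consequently the two inclusions $\twon/\unary\subseteq\co\twon/\unary$ and $\co\twon/\unary\subseteq\twon/\unary$ are complements of one another, and establishing either one yields the other. I would therefore concentrate on proving $\co\twon/\unary\subseteq\twon/\unary$.

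Next I would unfold the definitions. Let $\LL=\{(L_n^{(+)},L_n^{(-)})\}_{n\in\nat}$ be an arbitrary family in $\co\twon/\unary$ over a unary alphabet. By definition $\co\LL=\{(L_n^{(-)},L_n^{(+)})\}_{n\in\nat}$ lies in $\twon/\unary$, so there is a polynomial-size family $\{M_n\}_{n\in\nat}$ of 2nfa's solving $\co\LL$; that is, $M_n$ accepts every $x\in L_n^{(-)}$ and rejects every $x\in L_n^{(+)}$. Forming the complementary machine $\overline{M_n}$ (swap accepting and rejecting states and read it co-nondeterministically), I would check that $\overline{M_n}$ co-nondeterministically solves $\LL$: for $x\in L_n^{(+)}$ all paths of $M_n$ reject, so $\overline{M_n}$ accepts $x$, while for $x\in L_n^{(-)}$ some path of $M_n$ accepts, so $\overline{M_n}$ rejects $x$. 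The state complexity of $\overline{M_n}$ equals that of $M_n$, so $\{\overline{M_n}\}_{n\in\nat}$ remains polynomial-size, albeit co-nondeterministic.

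Then I would apply the result of \cite{GMP07}: on unary inputs, any two-way finite automaton making co-nondeterministic moves with $s$ states can be simulated by an ordinary 2nfa with $\poly(s)$ states recognizing the same unary language. Applying this to each $\overline{M_n}$ produces a genuine 2nfa $M_n'$ whose size is polynomial in $sc(M_n)$, hence polynomial in $n$, and whose accept/reject behavior on every unary input agrees with that of $\overline{M_n}$; in particular $M_n'$ solves $(L_n^{(+)},L_n^{(-)})$. Since the ``polynomial time'' requirement is free for 2nfa's by \cite{GMP07} as noted in Section \ref{sec:promise-problem}, the family $\{M_n'\}_{n\in\nat}$ witnesses $\LL\in\twon/\unary$, which completes the inclusion.

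The only nontrivial ingredient is the unary complementation of \cite{GMP07}; the remaining steps are bookkeeping. The main obstacle, and the reason the argument is confined to unary alphabets, is precisely that this polynomial-state complementation of 2nfa's is known for unary inputs but is open for general alphabets (indeed, this is exactly the $\twon=?\co\twon$ question). I would also take care that the simulation respects the promise structure, namely that correctness is required only on valid (promised) instances; this is automatic, because the GMP simulation complements the behavior of $\overline{M_n}$ on \emph{all} unary inputs and thus a fortiori on the promised ones.
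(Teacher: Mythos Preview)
Your proposal is correct and follows essentially the same route as the paper: the paper merely states that the Geffert--Mereghetti--Pighizzini \cite{GMP07} polynomial-state simulation of a complementary (co-nondeterministic) unary 2nfa by an ordinary 2nfa ``instantly'' yields $\twon/\unary=\co\twon/\unary$, and you have simply spelled out the bookkeeping behind that sentence. Your handling of the involution $\co$, the promise structure, and the free polynomial-time bound for 2nfa's is all in order.
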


Next, we recall the result of Geffert and Pighizzini \cite{GP11}, in which  any 2nfa can be simulated by an appropriate 2ufa with a polynomial increase of the state complexity of the 2nfa. From this follows the collapse of $\twon/\mathrm{unary}$ to $\twou/\mathrm{unary}$ as stated in Fig.~\ref{fig:hierarchy}.

\begin{proposition}
$\twon/\unary = \twou/\unary$.
\end{proposition}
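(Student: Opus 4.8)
The plan is to establish the two inclusions $\twou/\unary \subseteq \twon/\unary$ and $\twon/\unary \subseteq \twou/\unary$ separately. The first is immediate from the definitions: every unambiguous 2nfa is in particular a 2nfa, so any polynomial-size family of unambiguous automata solving a family $\LL$ of unary promise problems already witnesses $\LL\in\twon/\unary$. Hence the whole content lies in the reverse inclusion, and for this I would invoke, essentially verbatim, the simulation result of Geffert and Pighizzini \cite{GP11} recalled immediately before the proposition, which converts any unary 2nfa into an equivalent unary 2ufa at the cost of only a polynomial blow-up in the number of inner states.

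Concretely, I would first fix an arbitrary family $\LL=\{(L_n^{(+)},L_n^{(-)})\}_{n\in\nat}$ over a unary alphabet in $\twon/\unary$, together with a polynomial-size family $\MM=\{M_n\}_{n\in\nat}$ of 2nfa's solving it, so that $sc(M_n)\leq p(n)$ for some polynomial $p$. For each index $n$ I would apply the Geffert--Pighizzini conversion to the single unary automaton $M_n$, obtaining an equivalent 2ufa $M_n'$ with $L(M_n')=L(M_n)$ and $sc(M_n')\leq c\cdot sc(M_n)^d$ for absolute constants $c,d$ coming from their polynomial bound. Setting $\MM'=\{M_n'\}_{n\in\nat}$ then provides the candidate family of unambiguous automata.

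Two verifications remain. For size, the composition $n\mapsto c\cdot p(n)^d$ is again a polynomial, so $sc(M_n')\leq c\cdot p(n)^d$ confirms that $\MM'$ is polynomial-size. For correctness, equivalence of $M_n'$ and $M_n$ gives $L_n^{(+)}\subseteq L(M_n)=L(M_n')$ and $L_n^{(-)}\cap L(M_n')=L_n^{(-)}\cap L(M_n)=\setempty$, so $M_n'$ accepts every string of $L_n^{(+)}$ and rejects every string of $L_n^{(-)}$; moreover the resulting automaton is unambiguous on every input, hence a fortiori on every promised instance, which is exactly the notion of unambiguity used in this paper. Thus $\MM'$ witnesses $\LL\in\twou/\unary$, completing the reverse inclusion.

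I expect the only delicate points to be bookkeeping rather than conceptual. One must ensure that the per-index application of \cite{GP11} yields a \emph{uniform} polynomial bound on the state blow-up, i.e.\ that the constants $c,d$ do not depend on $n$, so that polynomial size is genuinely preserved across the whole family; and one must confirm that the ``unambiguous on all inputs'' guarantee supplied by \cite{GP11} indeed implies the weaker ``at most one accepting path on each promised instance'' required here (it trivially does). The polynomial-time requirement causes no difficulty, since by \cite{GMP07} imposing it does not alter the definitions of $\twon$ or $\twou$ in the two-way finite-automaton setting.
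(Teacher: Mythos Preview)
Your proposal is correct and follows essentially the same approach as the paper: the paper simply invokes the Geffert--Pighizzini result \cite{GP11} that any unary 2nfa can be simulated by a 2ufa with only a polynomial state blow-up and concludes directly that $\twon/\unary$ collapses to $\twou/\unary$. Your write-up just makes explicit the per-index application and the bookkeeping (uniform polynomial bound, promise-problem correctness) that the paper leaves implicit.
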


The above two propositions should be compared to Theorems \ref{NPD-vs-UPD}(1) and \ref{complement-collapse}(1) regarding polynomial-ceiling restrictions.

\section{Future Directions and Open Questions}

We have continued a study on the features of nonuniform {(stack-)state} complexity classes, initiated in 1978 by Sakoda and Sipser \cite{SS78}.

To obtain two major contributions of Section \ref{sec:effects} (Theorems \ref{NPD-vs-UPD} and \ref{complement-collapse}), we have employed a generic proof argument, introduced in \cite{Yam22a}, of exploiting a close connection between nonuniform state complexity classes and parameterized complexity classes as described in Section \ref{sec:parameterization}. This connection is proven in Section \ref{sec:effects} to be quite useful even if there is no (known) complete problems.

Another important contribution is to supplement multiple counters to underlying finite and pushdown automata. With the use of such counters, we have significantly expanded the scope of nonuniform {(stack-)state} complexity research in Section \ref{sec:multiple-counters}.

Here, we provide a short list of open questions for the avid reader.

\renewcommand{\labelitemi}{$\circ$}
\begin{enumerate}\vs{-1}
  \setlength{\topsep}{-2mm}%
  \setlength{\itemsep}{1mm}%
  \setlength{\parskip}{0cm}%

\item In Proposition \ref{reducing-counter}, we have reduced the number of counters used by each finite automaton down to $4$ and by each pushdown automaton to $3$. Are those numbers the minimum? If so, can we prove  that $\twonct_2\neq \twonct_3\neq \twonct_4$?

\item In Theorems \ref{counter-four} and \ref{co-simulation}, we have shown that $\twonct_4=\co\twonct_4$ and $\twonpdct_3=\co\twonpdct_3$. Can we remove any multiple counters and show that $\twon=\co\twon$ and $\twonpd=\co\twonpd$?

\item Without any input-size bounds, is it true that $\twon=\twou$ and $\twonpd=\twoupd$? Moreover, do their counter versions $\twonct_4=\twouct_4$ and $\twonpdct_3=\twoupdct_3$ hold?

\item We have seen the usefulness of polynomial ceilings. Kapoutsis \cite{Kap14} discussed the cases of non-polynomial ceilings, including $2^{O(n^k)}$ ceilings and $2^{O(\log^k{n})}$ ceilings. Can we obtain  results similar to ones in this work when all families in a given complexity class have those ceilings?

\item In Section \ref{sec:unary-inputs}, we have had a short discussion on the case of unary instances in nonuniform families of machines. It is of importance to explore the unary input case in a wider range of nonuniform state complexity classes.

\item In comparison to Theorem \ref{complement-collapse}(2), Theorem \ref{NPD-vs-UPD}(2) requires the ``polynomial runtime'' bound of underlying machines because of the use of Claim \ref{2UPD-auxpda-02}.
    Can we remove this ``polynomial runtime'' restriction from Theorem \ref{NPD-vs-UPD}(2)?
\end{enumerate}

\section*{Appendix: Proof of Lemma \ref{claim-reduction}}

In this Appendix, we provide the omitted proof of Lemma \ref{claim-reduction} from the main text for the readability.

\s

Given a family $\MM=\{M_n\}_{n\in\nat}$ of polynomial-size machines, we assume that $M_n$ runs in time $(n|x|)^{k}$ for a fixed constant $k\in\nat^{+}$.
We start with fixing a number $n\in\nat$ and $x\in\Sigma^*$ arbitrarily in the following explanation. Let $p$ denote $(n|x|)^k$ for convenience.
Now, we explain how to reduce 2 counters to one counter (with the help of 3 extra reusable counters).

For each index $j\in[2]$, $i_j$ denotes the current content of the $j$th counter of $M_n$. Let us consider the pair $(i_1,i_2)$. We encode this pair into a number defined as $i_1\cdot p + i_2$, which is succinctly expressed as $\pair{i_1,i_2}_p$. We wish to simulate the behavior of the counters of the machine using four counters CT1, CT2, CT3, and CT4. The number $\pair{i_1,i_2}_p$ will be stored in CT3, and CT2 is used to remember the value $p$ during the simulation.

(a) Here, we describe how to produce $p$ in CT2 using CT1, CT4, and a tape head.
Firstly, we store the current tape head position into CT4 by moving the tape head to the start cell
so that we will recover this tape head position after the production of $p$ in CT2. Reset CT1 and CT2 to be empty.
(i) Produce $n|x|$ in CT1 by sweeping the input tape $n$ times between the two endmarkers by the tape head. Note that $n$ is a constant for $M_n$ and it should be provided by $M_n$'s inner states.
(ii) Increment CT2 by $n|x|$ by moving the tape head. Decrement CT1 by 1.
(iii) Repeat (ii) until CT1 becomes empty.
(iv) Finally, move the tape head to the start cell and return it back to the original position by emptying CT4. Reset CT1 and CT4 to be empty.

(b) Now, we describe how to check whether $i_j=0$ or not for each $j\in[2]$ for the current value $\pair{i_1,i_2}_p$ stored in CT3. Initially, we set CT4 to be empty. Firstly, we check if CT3 is empty. If so, then we know $i_1=i_2=0$ and thus we terminate the procedure (b).
Assume that CT3 is not empty. Reset CT1 to be empty. Note that CT2 contains $p$ by the procedure (a).  In the following process, we maintain the condition that the sum of the contents of CT3 and CT4 equals $\pair{i_1,i_2}_p$.  Similarly, we keep the condition that the sum of the contents of CT1 and CT2 equals $p$.
(i) Decrement CT2 and CT3 and increment CT1 and CT4 simultaneously by one until either CT2 or CT3 becomes empty.
(iii) If CT3 is empty, then we know that $i_1=0$. In this case, $i_2$ must be positive.
Move the content of CT4 to CT3 by decrementing CT4 and incrementing CT3  simultaneously until CT4 is empty. Reset CT1 and CT2 to be empty.
(iv) If CT2 is empty, then we know that $i_1\neq0$. Move the content of CT1 to CT2. Repeat (ii)-(iv) until (ii) stops with CT3 being empty. Note that $i_1$ equals the total number of repetitions.
(v) Finally, recover $\pair{i_1,i_2}_p$ in CT3 by moving the content of CT4 to CT3.

(c) Let us describe how to simulate a push/pop operation of the counters made by $M_n$. Assume that $(i_1,i_2)$ is changed to $(j_1,j_2)$ by $M_n$ in a single step. After the procedure (b), we already know whether $i_1$ as well as $i_2$ is either zero or positive. (i) If $j_1=i_1+1$, then keep incrementing CT3 and CT1 as we decrement CT2 down to $0$. Move the content of CT1 to CT2 by emptying CT1. (ii) If $j_2=i_2+1$, then we push ``$1$'' into CT3.
(iii) If $i_1>0$ and $j_1=i_1-1$, then we decrement CT1, CT2, and CT3 until CT2 becomes empty. Add the remaining content of CT1 to CT2.
(iv) If $i_2>0$ and $j_2=i_2-1$, then we pop ``$1$'' from CT3.

(d) If a stack is usable during the processes (a)--(c), then we can use the stack as a counter by first writing a separator, say, $\#$ used as a new bottom marker and starting either pushing ``$1$'' or popping it.  This makes it possible to reduce the number of counters further down to three.

This completes the proof of the lemma.

\let\oldbibliography\thebibliography
\renewcommand{\thebibliography}[1]{%
  \oldbibliography{#1}%
  \setlength{\itemsep}{-2pt}%
}
\bibliographystyle{alpha}


\end{document}